\documentclass[9pt,twoside,article]{IEEEtran}

\ifCLASSINFOpdf
  \usepackage[pdftex]{graphicx}
\else
\fi
\usepackage{cite}
\usepackage{amsmath}
\usepackage{amssymb}
\usepackage{amsthm}
\usepackage{amsfonts}       
\usepackage{algorithm}
\usepackage[noend]{algpseudocode}
\usepackage{url}            
\usepackage[utf8]{inputenc} 
\usepackage[T1]{fontenc}    
\usepackage{booktabs}       
\usepackage{nicefrac}       
\usepackage{microtype}      
\usepackage[colorlinks=false, pdfborder={0 0 0.5}]{hyperref}
\usepackage{cleveref}		


\setlength{\abovedisplayskip}{0pt}
\setlength{\belowdisplayskip}{0pt}

\newtheorem{thm}{Theorem}\crefname{thm}{Theorem}{Theorems}
\newtheorem{lem}{Lemma}
\newtheorem{rem}{Remark}
\newtheorem{defn}{Definition}
\newtheorem{cor}{Corollary}\crefname{cor}{Corollary}{Corollaries}

\raggedbottom

\begin{document}
\allowdisplaybreaks

\title{Analytic Solution of a Delay Differential Equation Arising in Cost Functionals for Systems with Distributed Delays}

\author{
	Suat~Gumussoy,~\IEEEmembership{Member,~IEEE,} and~Murad~Abu-Khalaf,~\IEEEmembership{Member,~IEEE}

	\thanks{Suat Gumussoy is with MathWorks, Natick, MA 01760 USA. {\tt\small sgumusso@mathworks.com}.}

	\thanks{Murad~Abu-Khalaf is with MIT Computer Science and Artificial Intelligence Laboratory, Massachusetts Institute of Technology, Cambridge, MA~02139 USA. {\tt\small murad@mit.edu}.}

}


%



\maketitle

\begin{abstract}
The solvability of a delay differential equation arising in the construction of quadratic cost functionals, \textit{i.e.} Lyapunov functionals, for a linear time-delay system with a constant and a distributed delay is investigated. We present a delay-free auxiliary ordinary differential equation system with algebraically coupled split-boundary conditions, that characterizes the solutions of the delay differential equation and is used for solution synthesis. A spectral property of the time-delay system yields a necessary and sufficient condition for existence and uniqueness of solutions to the auxiliary system, equivalently the delay differential equation. The result is a tractable analytic solution framework to the delay differential equation.
\end{abstract}

\begin{IEEEkeywords}
Cost functionals, delay differential equations, distributed delay, Lyapunov functionals.
\end{IEEEkeywords}

%
\IEEEpeerreviewmaketitle

\section{Introduction} \label{Section:Intro}

This investigation considers the time-delay system
\begin{equation} \label{Equation:TimeDelaySystem} 
\dot{x}(t)=A_{0} x(t)+A_{1} x(t-h)+\smallint\limits_{-h}^{0}A_{D} (\theta )x(t+\theta )d\theta  ,   
\end{equation}

\noindent with initial function $\phi (\cdot )$ such that at $t=0, \; x(t+\theta) = \phi (\theta), \; \forall \theta \in [-h,0]$. We study the solvability of the associated system of equations \eqref{Equation:DDEc} abbreviated as (DDEc), a delay differential equation (DDE) \eqref{Equation:DynamicDDE} with coupling conditions \eqref{Equation:SymmetricDDE} and \eqref{Equation:AlgebraicDDE}:
\begin{subequations} \label{Equation:DDEc}
	\begin{gather}
		\dot{P}(\tau )=P(\tau )A_{0} +P(\tau -h)A_{1} +\smallint\limits_{-h}^{0}P(\tau +\theta )A_{D} (\theta )d\theta  ,\label{Equation:DynamicDDE}\\[5pt]  
		P(-\tau )=P(\tau )^{\intercal}, \label{Equation:SymmetricDDE} \\[5pt] 
		\begin{split} \label{Equation:AlgebraicDDE}
			 -Q=A_{0}^{\intercal} P(0)+P(0)A_{0} +A_{1}^{\intercal} P(h)+P(-h)A_{1}\\
			 +\smallint\limits_{-h}^{0}{\left[ A_{D}(\theta )^{\intercal}P(-\theta )+P(\theta )A_{D}(\theta )\right]d\theta },
		\end{split}
	\end{gather}
\end{subequations}
\noindent where $h\ge 0$, and $\tau \in [0,h]$. Equation \eqref{Equation:SymmetricDDE} is a symmetry constraint and \eqref{Equation:AlgebraicDDE} introduces algebraic and integral constraints involving the boundary values. The state $x(t)$ is an $n\mathrm{\times}1$ function of time, and $A_{D} (\cdot )$ and $P(\cdot )$ are $n\mathrm{\times}n$ functions. Moreover, $A_{0} $, $A_{1} $, and $Q=Q^{\intercal} $ are $n\mathrm{\times}n$ constant matrices. $A_{D} (\theta )\buildrel\Delta\over= C_{d}^{} e^{A_{d} \theta } B_{d}^{} $ with $e^{A_{d} \theta } $ a matrix exponential. $C_{d} $, $A_{d} $, and $B_{d} $ are $n\mathrm{\times}n_{d}$, $n_{d}\mathrm{\times}n_{d}$, and $n_{d}\mathrm{\times}n$ constant matrices.

The time-delay system \eqref{Equation:TimeDelaySystem} has a lumped or constant delay, and a distributed delay. It can be expressed in integrated form \cite{Hale1993} \cite{KharitonovBook2013} as
\begin{equation} \label{Equation:StateTrajectory}
x(t,\phi)=\left\{
				\begin{array}{l}
					{\Phi _{0}(t)\phi(0) +\smallint\limits_{-h}^{0}\Phi_{0}(t-h-\alpha )A_{1}\phi(\alpha)d\alpha \hspace{18pt} t\ge 0} \\  
					{+\smallint\limits_{-h}^{0}\smallint\limits_{-h}^{\alpha }\Phi_{0}(t-\alpha +\tau)A_{D}(\tau)d\tau\phi(\alpha )d\alpha,} \\
					{\phi (t), \hspace{130pt} 0>t\ge -h} 
				\end{array}
			\right.
\end{equation}

\noindent where $\Phi_{0} $ is the fundamental matrix of \eqref{Equation:TimeDelaySystem}. If the time-delay system \eqref{Equation:TimeDelaySystem} is stable, then integration over its trajectories yields
\begin{equation} \label{Equation:Cost-to-go} 
	V(\phi )=\smallint\limits_{0}^{\infty }x(t,\phi )^{\intercal} Qx(t,\phi )dt ,   
\end{equation}

\noindent which serves as a Lyapunov functional, and is also a cost-to-go from an initial state $\phi $ with respect to $Q$. Motivated by parameter optimization problems, we are interested in computing $V(\phi )$ without running \eqref{Equation:TimeDelaySystem} forward-in-time to generate a system response. Substituting equation \eqref{Equation:StateTrajectory} in \eqref{Equation:Cost-to-go} gives the following expression
\begin{subequations} \label{Equation:Cost-to-goAlternative}
	\begin{align}
		\begin{split} \label{Equation:Cost-to-go_P} 
			V(\phi )=&\phi(0)^{\intercal}P(0)\phi(0) + \phi(0)^{\intercal}\smallint\limits_{-h}^{0}P(-h-\beta)A_{1}\phi(\beta)d\beta +\cdots \\ 
			&+\smallint\limits_{-h}^{0}\phi (\alpha )^{\intercal} \smallint\limits_{-h}^{\alpha }A_{D}^{\intercal}(\tau _{1})\smallint\limits_{-h}^{0}\smallint\limits_{-h}^{\beta }P(\alpha-\beta-\tau_{1} +\tau_{2} )\\
			&\hspace{110pt} A_{D}(\tau_{2})d\tau_{2}  \phi (\beta )d\beta  d\tau _{1}  d\alpha  ,
		\end{split}\\
		P(\tau )&\buildrel\Delta\over= \smallint\limits_{0}^{\infty }\Phi _{0} (t)^{\intercal} Q\Phi _{0} (t+\tau )dt ,   \label{Equation:PDefinition}
	\end{align}
\end{subequations}

\noindent where for brevity we omitted some terms in \eqref{Equation:Cost-to-go_P}. Computing $V(\phi )$ using \eqref{Equation:Cost-to-go_P} requires $P(\tau )$. It was shown in \cite{SCL2006}, \cite{Huang1989} and \cite{KharitonovBook2013} under different assumptions on the initial function $\phi $ that $P(\tau )$ defined by \eqref{Equation:PDefinition} satisfies dynamic \eqref{Equation:DynamicDDE}, symmetric \eqref{Equation:SymmetricDDE}, and algebraic \eqref{Equation:AlgebraicDDE} relations. Our aim is therefore to solve the DDEc \eqref{Equation:DDEc} by first writing it as an auxiliary ordinary differential equation (ODE) system with algebraically coupled split-boundary conditions abbreviated as (ODEc); then turning the ODEc into an initial value problem. We show a necessary and sufficient condition for the existence and uniqueness of solutions to this ODEc system.

Section \ref{Section:Background} provides background and notation. Section \ref{Section:MainResult} presents the ODEc system in addition to a necessary and sufficient condition to the existence and uniqueness of solutions to the ODEc. Section \ref{Section:AnalyticSolution} presents a framework for analytic solutions to the DDEc through the ODEc. Conclusions are given in Section \ref{Section:Conclusion}.

\section{Background} \label{Section:Background}
\subsection{Origins of the Problem} \label{Section:OriginsProblem}

In \cite{SCL2006}, the solvability of \eqref{Equation:DynamicDDE}-\eqref{Equation:AlgebraicDDE} was studied in the context of stability. In doing so, \cite{SCL2006} proposed solving the delay Lyapunov equation \eqref{Equation:DynamicDDE}-\eqref{Equation:AlgebraicDDE} by rewriting it as an ordinary differential equation system with split-boundary conditions, and solving the resulting equations instead. In \cite{arXiv:1802.06831}, we show that the split-boundary conditions given in \cite{SCL2006} are linearly dependent and do not yield a unique initial value problem. Thus the auxiliary system proposed in \cite{SCL2006} fails to characterize the solutions of the delay Lyapunov equation, and a numerical example therein incorrectly computes $P(\tau )$. Nonetheless, \cite{SCL2006} provides great insights into solving such problems and inspires our work herein.

We arrived at this problem due to our interest in optimizing and tuning the parameters of stable time-delay systems, \textit{cf.} \cite{MarshallGorecki1992}, with a distributed delay. Distributed delays appear naturally in optimal control of time-delay systems \cite{Krasovskii1962}. Evaluating closed-loop performance could be done by integrating the cost over the system trajectories \eqref{Equation:Cost-to-go} or alternatively via \eqref{Equation:Cost-to-go_P} provided $P(\tau )$ \eqref{Equation:PDefinition} is known.

For results earlier than \cite{SCL2006}, we refer the reader to a body of work on Lyapunov stability for different classes of time-delay systems resulting in related DDEs and auxiliary ODEs \cite{Krasovskii1956,Hale1993,Repin1965,Datko1972,CastelanInfante1977,InfanteCastelan1978,Datko1980,Infante1982,Huang1989,Louisell1997,KharitonovPlischke2006}.

\subsection{Notation} \label{Section:Notation}
$\mathbb{R}$ denotes the real line and $\mathbb{C}$ the set of complex numbers. Given matrix $Y\in \mathbb{R}^{p\times q} $, $Y_{*i} $ denotes its \textit{i${}^{th}$} column and $y_{pq} $ the element at the \textit{p${}^{th}$} row and \textit{q${}^{th}$} column. $vec(\cdot )$ stacks the columns of $Y\in \mathbb{R}^{p\times q} $ into a single column. Operator $\otimes $ is the Kronecker product \cite{Brewer1978} where for a matrix $Z$

	\[vec(Y)\buildrel\Delta\over= \left[\begin{array}{c} {Y_{*1} } \\ {Y_{*2} } \\ {\vdots } \\ {Y_{*q} } \end{array}\right], 
	Y\otimes Z\buildrel\Delta\over= \left[\begin{array}{cccc} {y_{11} Z} & {y_{12} Z} & {\ldots } & {y_{1q} Z} \\ {y_{21} Z} & {\ddots } & {} & {\vdots } \\ {\vdots } & {} & {\ddots } & {\vdots } \\ {y_{p1} Z} & {\ldots } & {\ldots } & {y_{pq} Z} \end{array}\right].\]

An all zero entries two-dimensional matrix of context-dependent size is denoted by $\mathbf{0}$, and $0_{n\times m} $ for a fixed \textit{n}$\mathrm{\times}$\textit{m} size. Similarly, $I$ denotes an identity matrix of arbitrary size, and $I_{n} $ a size \textit{n} identity matrix.

\section{Main Result} \label{Section:MainResult}

In \Cref{Section:ODESplitBoundaryConditions}, we introduce the ODEc \eqref{Equation:ODEc}, an auxiliary ordinary differential equation system with algebraically coupled split-boundary conditions, and study its solvability. In \Cref{Section:RelationDDE}, we show the relation between the ODEc \eqref{Equation:ODEc} and the DDEc \eqref{Equation:DDEc} systems.

\subsection{ODE with Coupled Split-Boundary Conditions (ODEc)} \label{Section:ODESplitBoundaryConditions}

The ODEc is defined by the dynamics \eqref{Equation:ODE} and the split-boundary conditions \eqref{Equation:AlgebraicODE}-\eqref{Equation:Boundary3456ODE} for the interval $[0,h]$, where $\forall \tau \in \mathbb{R}$,
\begin{subequations} \label{Equation:ODEc}
	\begin{align}
		\begin{split} \label{Equation:ODE}			
			\dot\Omega_{1}(\tau) =& \Omega_{1}(\tau)A_{0} +\Omega_{2}(\tau)A_{1} +\Omega_{3}(\tau)B_{d} +\Omega_{4}(\tau)B_{d} , \\ 
			\dot\Omega_{2}(\tau) =& -A_{1}^{\intercal} \Omega_{1}(\tau)-A_{0}^{\intercal} \Omega_{2} (\tau )-B_{d}^{\intercal} \Omega_{5}(\tau )-B_{d}^{\intercal} \Omega_{6}(\tau), \\ 
			\dot\Omega_{3}(\tau) =& -\Omega_{3}(\tau)A_{d} +\Omega_{1}(\tau)C_{d}^{} , \\ 
			\dot\Omega_{4}(\tau) =& -\Omega_{4}(\tau)A_{d} -\Omega_{2}(\tau)C_{d}^{} e^{-A_{d} h} , \\ 
			\dot\Omega_{5}(\tau) =& A_{d}^{\intercal} \Omega_{5} (\tau)+\left(C_{d}^{} e^{-A_{d} h} \right)^{\intercal}\Omega_{1}(\tau), \\ 
			\dot\Omega_{6}(\tau) =& A_{d}^{\intercal} \Omega_{6}(\tau)-C_{d}^{\intercal} \Omega_{2}(\tau),
		\end{split}\\[5pt]
		\begin{split} \label{Equation:AlgebraicODE}
			-Q=&\Omega_{1}(0)A_{0} + \Omega_{2}(0)A_{1} + \Omega_{4}(0)B_{d}\\
			  &+ A_{1}^{\intercal}\Omega_{1}(h)+A_{0}^{\intercal}\Omega_{2}(h)+B_{d}^{\intercal}\Omega_{5}(h),
		\end{split}\\[5pt]
		\mathbf{0}=&\Omega_{1}(0)-\Omega_{2}(h), \label{Equation:Boundary12ODE} \\[5pt]
		\mathbf{0}=&\Omega_{3}(0),\; \mathbf{0}=\Omega_{4}(h),\; \mathbf{0}=\Omega _{5}(0),\; \mathbf{0}=\Omega_{6}(h). \label{Equation:Boundary3456ODE}
	\end{align}
\end{subequations}

\noindent Note that $\Omega _{1} (\cdot )$, $\Omega _{2} (\cdot )$ are \textit{n}$\mathrm{\times}$\textit{n}; $\Omega _{3} (\cdot )$, $\Omega _{4} (\cdot )$ are \textit{n}$\mathrm{\times}$\textit{n${}_{d}$}; $\Omega _{5} (\cdot )$, $\Omega _{6} (\cdot )$ are \textit{n${}_{d}$}$\mathrm{\times}$\textit{n}. Therefore, \eqref{Equation:ODE} has \textit{n${}_{s}$} states where
\begin{equation} \label{Equation:ns} 
	n_{s} =2n^{2} +4nn_{d}.   
\end{equation} 

Existence and uniqueness of solutions to \eqref{Equation:ODEc} corresponds to whether there exists a \textit{unique} initial value $\Omega_{1}(0)$, $\Omega_{2}(0)$, $\Omega_{4}(0)$ and $\Omega_{6}(0)$ with $\mathbf{0}=\Omega_{3}(0)=\Omega_{5}(0)$ such that the solution of \eqref{Equation:ODE} at time $h$ satisfies \eqref{Equation:AlgebraicODE}-\eqref{Equation:Boundary3456ODE}. This is addressed in \Cref{Theorem:UniqueODESpectrum} stated after the following lemmas.

\begin{lem} \label{Lemma:CollapseSubsystems}
$\{\omega _{1}^{} (\tau ),\allowbreak\omega _{2}^{} (\tau ),\omega _{3}^{} (\tau ),\allowbreak\omega _{4}^{} (\tau ),\allowbreak\omega _{5}^{} (\tau ),\allowbreak\omega _{6}^{} (\tau )\}$ is a solution to the ODEc \eqref{Equation:ODEc} if and only if $\forall \tau \in \mathbb{R}$, $\{\omega _{1}^{} (\tau ),\allowbreak\omega _{2}^{} (\tau )\}$ satisfies
\begin{subequations} \label{Equation:CollapseSubsystems}
	\begin{align}
	\begin{split} \label{Equation:Collapsed34}
		\dot\Omega_{1}(\tau) =& \Omega_{1}(\tau)A_{0} + \Omega_{2}(\tau)A_{1} + \smallint\limits_{-\tau}^{0}\Omega_{1}(\tau + \theta)A_{D}(\theta)d\theta \\
		&+\smallint\limits_{-h}^{-\tau}\Omega_{2}(\tau + \theta +h)A_{D}(\theta)d\theta,
	\end{split}\\
	\begin{split} \label{Equation:Collapsed56}
		\dot\Omega_{2}(\tau) =& -A_{1}^{\intercal}\Omega_{1}(\tau)-A_{0}^{\intercal}\Omega_{2}(\tau)\\
		&-\smallint\limits_{-h}^{-h+\tau}A_{D}(\theta )^{\intercal}\Omega_{1}(\tau -\theta -h)d\theta \\ 
		&-\smallint\limits_{-h+\tau}^{0}A_{D}(\theta )^{\intercal}\Omega_{2}(\tau - \theta)d\theta,  
	\end{split}\\
	-Q=&\dot\Omega_{1}(0)-\dot\Omega_{2}(h),  \label{Equation:BoundaryDerivative12}  \\
	\mathbf{0}=&\Omega_{1}(0) - \Omega_{2}(h), \label{Equation:Boundary12NoDerivative}
	\end{align}
\end{subequations}
and $\{\omega _{3}^{} (\tau ),\allowbreak\omega _{4}^{} (\tau ),\allowbreak\omega _{5}^{} (\tau ),\allowbreak\omega _{6}^{} (\tau )\}$ satisfies
\begin{subequations} \label{Equation:CollapsedSubsystems}
	\begin{align} 
		\omega_{3}(\tau ) =& \smallint\limits_{-\tau}^{0}\omega_{1}(\tau + \theta )C_{d}^{} e^{A_{d} \theta} d\theta , \label{Equation:Collapsed3}\\
		\omega_{4}(\tau) =& \smallint\limits_{-h}^{-\tau}\omega_{2}(\tau +\theta +h)C_{d} e^{A_{d} \theta } d\theta, \label{Equation:Collapsed4}\\
		\omega_{5} (\tau ) =& \smallint\limits_{-h}^{-h+\tau }\left(C_{d}^{} e^{A_{d} \theta } \right)^{\intercal} \omega _{1} (\tau -\theta -h)d\theta  , \label{Equation:Collapsed5}\\  
		\omega _{6} (\tau )=& \smallint\limits_{-h+\tau }^{0}\left(C_{d}^{} e^{A_{d} \theta } \right)^{\intercal} \omega _{2} (\tau -\theta )d\theta. \label{Equation:Collapsed6}
	\end{align}
\end{subequations}
\end{lem}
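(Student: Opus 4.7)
The structural observation is that the subsystems for $\Omega_3,\Omega_4,\Omega_5,\Omega_6$ in \eqref{Equation:ODE} are linear ODEs driven by $\Omega_1,\Omega_2$ whose homogeneous parts are matrix exponentials of $\pm A_d$, and each is paired with exactly one endpoint condition from \eqref{Equation:Boundary3456ODE}. Thus each can be solved in closed form by variation of parameters, and the resulting expressions, after an appropriate change of integration variable, must match \eqref{Equation:CollapsedSubsystems}. Substituting these into the first two equations of \eqref{Equation:ODE} will collapse the pointwise couplings into the distributed-delay convolutions of \eqref{Equation:Collapsed34}--\eqref{Equation:Collapsed56} because $C_d e^{A_d\theta}B_d = A_D(\theta)$.

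For the forward direction, I would first solve $\dot\omega_3=-\omega_3 A_d+\omega_1 C_d$ with $\omega_3(0)=\mathbf{0}$ by variation of parameters, obtaining $\omega_3(\tau)=\int_0^\tau \omega_1(s)\,C_d e^{-A_d(\tau-s)}\,ds$, then change variables $\theta=s-\tau$ to land on \eqref{Equation:Collapsed3}. The same recipe, integrated backward from $\tau=h$ for $\omega_4,\omega_6$ (using $\omega_4(h)=\omega_6(h)=\mathbf{0}$) and forward from $\tau=0$ for $\omega_5$ (using $\omega_5(0)=\mathbf{0}$), yields \eqref{Equation:Collapsed4}--\eqref{Equation:Collapsed6}. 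Inserting these representations into the $\dot\omega_1$ and $\dot\omega_2$ equations of \eqref{Equation:ODE} produces \eqref{Equation:Collapsed34}--\eqref{Equation:Collapsed56} directly. Finally, evaluating $\dot\omega_1$ at $\tau=0$ (where $\omega_3(0)=\mathbf{0}$) and $\dot\omega_2$ at $\tau=h$ (where $\omega_6(h)=\mathbf{0}$) and subtracting reproduces exactly the right-hand side of \eqref{Equation:AlgebraicODE}, so \eqref{Equation:AlgebraicODE} is equivalent to \eqref{Equation:BoundaryDerivative12}; \eqref{Equation:Boundary12NoDerivative} is a verbatim restatement of \eqref{Equation:Boundary12ODE}.

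For the converse direction, start with $\{\omega_1,\omega_2\}$ satisfying \eqref{Equation:CollapseSubsystems} and \emph{define} $\omega_3,\omega_4,\omega_5,\omega_6$ by the formulas in \eqref{Equation:CollapsedSubsystems}. Differentiating those integrals with respect to $\tau$ via Leibniz's rule, the boundary contributions supply the forcing terms $\omega_1 C_d$, $-\omega_2 C_d e^{-A_d h}$, $(C_d e^{-A_d h})^\intercal\omega_1$, $-C_d^\intercal\omega_2$, while $\partial_\theta e^{A_d\theta}=A_d e^{A_d\theta}$ produces the $\pm A_d$ or $\pm A_d^\intercal$ factors; this verifies each ODE in \eqref{Equation:ODE}. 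The endpoint conditions in \eqref{Equation:Boundary3456ODE} are immediate because the integration limits in \eqref{Equation:CollapsedSubsystems} coincide at the prescribed endpoint. Reversing the algebraic computation from step (iii) then lifts \eqref{Equation:BoundaryDerivative12} back to \eqref{Equation:AlgebraicODE}, closing the equivalence.

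The main piece of careful bookkeeping is choosing the substitutions so that the closed-form solutions of the driven ODEs line up with the precise forms written in \eqref{Equation:CollapsedSubsystems}, in particular producing the $e^{-A_d h}$ factor in the $\Omega_4$ and $\Omega_5$ equations from the shifted endpoint $\theta=-h$. Everything else is routine calculus and Leibniz differentiation.
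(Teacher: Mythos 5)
Your proposal is correct and follows essentially the same route as the paper: integrate each driven subsystem for $\Omega_3,\dots,\Omega_6$ with its single endpoint condition (integrating factor / variation of parameters) to obtain the integral representations \eqref{Equation:CollapsedSubsystems}, substitute them back using $C_d e^{A_d\theta}B_d=A_D(\theta)$ to collapse the first two equations, and recover the equivalence of \eqref{Equation:AlgebraicODE} with \eqref{Equation:BoundaryDerivative12} by evaluating $\dot\omega_1$ at $0$ and $\dot\omega_2$ at $h$ where $\omega_3(0)=\omega_6(h)=\mathbf{0}$. The converse by Leibniz differentiation of the defined integrals is likewise exactly the paper's sufficiency argument.
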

\begin{proof}
To show sufficiency, differentiating the terms in \eqref{Equation:CollapsedSubsystems} with respect to $\tau$, it follows that $\{\omega _{3}^{} (\tau ),\allowbreak\omega _{4}^{} (\tau ),\allowbreak\omega _{5}^{} (\tau ),\allowbreak\omega _{6}^{} (\tau )\}$ satisfies
\begin{align*}
	\dot\omega_{3}(\tau) =& -\omega_{3}(\tau)A_{d} +\omega_{1}(\tau)C_{d}^{} , \\ 
	\dot\omega_{4}(\tau) =& -\omega_{4}(\tau)A_{d} -\omega_{2}(\tau)C_{d}^{} e^{-A_{d} h} , \\ 
	\dot\omega_{5}(\tau) =& A_{d}^{\intercal} \omega_{5} (\tau)+\left(C_{d}^{} e^{-A_{d} h} \right)^{\intercal}\omega_{1}(\tau), \\ 
	\dot\omega_{6}(\tau) =& A_{d}^{\intercal} \omega_{6}(\tau)-C_{d}^{\intercal} \omega_{2}(\tau),
\end{align*}
with boundary conditions $\mathbf{0}=\omega_{3}(0), \mathbf{0}=\omega_{4}(h), \mathbf{0}=\omega _{5}(0), \mathbf{0}=\omega_{6}(h)$. Moreover since $\{\omega _{1}^{} (\tau ),\allowbreak\omega _{2}^{} (\tau )\}$ satisfies \eqref{Equation:CollapseSubsystems}, then $\{\omega _{1}^{} (\tau ),\allowbreak\omega _{2}^{} (\tau ),\omega _{3}^{} (\tau ),\allowbreak\omega _{4}^{} (\tau ),\allowbreak\omega _{5}^{} (\tau ),\allowbreak\omega _{6}^{} (\tau )\}$ satisfies the ODEc \eqref{Equation:ODEc}.

To show necessity, if $\{\omega _{1}^{} (\tau ),\allowbreak\omega _{2}^{} (\tau ),\omega _{3}^{} (\tau ),\allowbreak\omega _{4}^{} (\tau ),\allowbreak\omega _{5}^{} (\tau ),\allowbreak\omega _{6}^{} (\tau )\}$ satisfies the ODEc \eqref{Equation:ODEc}, then from the subsystem $\dot{\Omega}_{3}(\tau) = -\Omega_{3}(\tau)A_{d} + \Omega_{1}(\tau)C_{d} , \mathbf{0}=\Omega_{3}(0)$, it follows that $\omega _{3}^{} (\tau )$ satisfies
\begin{align*}
	\dot{\omega }_{3} (\theta )e{}^{A_{d} \theta } +\omega {}_{3} (\theta )A{}_{d} e{}^{A_{d} \theta }  =& \omega _{1} (\theta )C_{d}^{} e^{A_{d} \theta } , \\ 
	\smallint\limits_{0}^{\tau }\frac{d}{d\theta } \left(\omega _{3} (\theta )e^{A_{d} \theta } \right)d\theta  =& {\smallint\limits_{0}^{\tau }\omega _{1} (\theta )C_{d}^{} e^{A_{d} \theta } d\theta  }, 
\end{align*}
and therefore $\omega_{3}(\tau )$ satisfies \eqref{Equation:Collapsed3}. Moreover, from the subsystem $\dot{\Omega}_{4}(\tau) = -\Omega_{4}(\tau)A_{d} -\Omega_{2}(\tau)C_{d} e^{-A_{d} h} , \mathbf{0}=\Omega_{4}(h)$, it follows that $\omega_{4}(\tau)$ satisfies \eqref{Equation:Collapsed4}. Similarly, from the subsystems for $\Omega_{5}$ and $\Omega_{6}$, it follows that $\omega_{5}(\tau)$ satisfies \eqref{Equation:Collapsed5} and $\omega_{6}(\tau)$ satisfies \eqref{Equation:Collapsed6}. Moreover, we obtain \eqref{Equation:Collapsed34} and \eqref{Equation:Collapsed56} by noting that $\{\omega _{1}^{} (\tau ),\allowbreak\omega _{2}^{} (\tau )\}$ satisfies the first two equations of \eqref{Equation:ODE}, with $\{\omega _{3}^{} (\tau ),\allowbreak\omega _{4}^{} (\tau ),\allowbreak\omega _{5}^{} (\tau ),\allowbreak\omega _{6}^{} (\tau )\}$ satisfying \eqref{Equation:CollapsedSubsystems}. Equation \eqref{Equation:BoundaryDerivative12} follows from evaluating at $\tau =0$ the derivatives of the first two equations of \eqref{Equation:ODE} and using \eqref{Equation:AlgebraicODE}. Therefore, $\{\omega _{1}^{} (\tau ),\allowbreak\omega _{2}^{} (\tau )\}$ satisfies \eqref{Equation:CollapseSubsystems}.
\end{proof}

\begin{lem} \label{Lemma:FlippedSolution}
If $\left\{\omega_{1}^{0}(\tau),\omega_{2}^{0}(\tau),\omega_{3}^{0}(\tau),\omega_{4}^{0}(\tau),\omega_{5}^{0}(\tau),\omega_{6}^{0}(\tau)\right\}$ is a solution to the ODEc \eqref{Equation:ODEc}, then 
the following is also a solution
\begin{multline*}
	\{\omega_{1}^{1}(\tau ),\omega_{2}^{1}(\tau ),\omega_{3}^{1}(\tau),\omega_{4}^{1}(\tau),\omega_{5}^{1}(\tau ),\omega_{6}^{1}(\tau )\} = \{\omega_{2}^{0}(h-\tau)^{\intercal},\\ 
	\omega_{1}^{0}(h-\tau )^{\intercal}, \omega_{6}^{0}(h-\tau )^{\intercal},\Omega_{5}^{0}(h-\tau)^{\intercal},\omega_{4}^{0}(h-\tau )^{\intercal}, \omega_{3}^{0}(h-\tau)^{\intercal} \}.
\end{multline*}
\end{lem}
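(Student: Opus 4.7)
The plan is to verify by direct substitution that the flipped tuple satisfies each of the eight relations that constitute \eqref{Equation:ODEc}. The three ingredients are (i) the chain rule $\frac{d}{d\tau}f(h-\tau)=-\dot f(h-\tau)$, (ii) the transpose identity $(AB)^{\intercal}=B^{\intercal}A^{\intercal}$, and (iii) the fact that the original tuple already satisfies \eqref{Equation:ODEc}. The six components have been paired so that the ODE for $\omega_{1}^{1}$ is the transpose of the ODE for $\omega_{2}^{0}$, the ODE for $\omega_{3}^{1}$ is the transpose of the ODE for $\omega_{6}^{0}$, the ODE for $\omega_{4}^{1}$ is the transpose of the ODE for $\omega_{5}^{0}$, and conversely for the other three; once this pairing pattern $(1\!\leftrightarrow\!2,\;3\!\leftrightarrow\!6,\;4\!\leftrightarrow\!5)$ is recognized, each verification becomes essentially mechanical.

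For the dynamics I would start with $\dot\omega_{1}^{1}(\tau)=-\dot\omega_{2}^{0}(h-\tau)^{\intercal}$, substitute the second equation of \eqref{Equation:ODE} evaluated at $s=h-\tau$, and transpose term by term. The resulting right-hand side is $\omega_{1}^{0}(h-\tau)^{\intercal}A_{1}+\omega_{2}^{0}(h-\tau)^{\intercal}A_{0}+\omega_{5}^{0}(h-\tau)^{\intercal}B_{d}+\omega_{6}^{0}(h-\tau)^{\intercal}B_{d}$, which by the definition of the flipped tuple is precisely $\omega_{1}^{1}(\tau)A_{0}+\omega_{2}^{1}(\tau)A_{1}+\omega_{3}^{1}(\tau)B_{d}+\omega_{4}^{1}(\tau)B_{d}$, matching the first equation of \eqref{Equation:ODE}. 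The remaining five dynamic equations reduce in the same way to the transpose of a single equation of \eqref{Equation:ODE} under the stated pairing; no cancellation or clever manipulation is required.

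The homogeneous boundary conditions \eqref{Equation:Boundary3456ODE} follow immediately from the definitions: $\omega_{3}^{1}(0)=\omega_{6}^{0}(h)^{\intercal}=\mathbf{0}$, $\omega_{4}^{1}(h)=\omega_{5}^{0}(0)^{\intercal}=\mathbf{0}$, $\omega_{5}^{1}(0)=\omega_{4}^{0}(h)^{\intercal}=\mathbf{0}$, and $\omega_{6}^{1}(h)=\omega_{3}^{0}(0)^{\intercal}=\mathbf{0}$. The coupling condition \eqref{Equation:Boundary12ODE} reduces to $\omega_{1}^{1}(0)-\omega_{2}^{1}(h)=\bigl(\omega_{2}^{0}(h)-\omega_{1}^{0}(0)\bigr)^{\intercal}=\mathbf{0}$. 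Finally, the algebraic identity \eqref{Equation:AlgebraicODE} for the flipped tuple is obtained by transposing the identity for the original tuple and using the symmetry $Q=Q^{\intercal}$, so that $-Q=(-Q)^{\intercal}$ on the left-hand side.

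I do not expect a genuine obstacle: the lemma asserts a structural symmetry of the ODEc, and once the index pairing is unwound the proof is a line-by-line calculation. The only step that is not purely formal is the algebraic boundary condition, where symmetry of $Q$ is essential; without $Q=Q^{\intercal}$ the flipped tuple would satisfy a twisted version of \eqref{Equation:AlgebraicODE} with $-Q^{\intercal}$ in place of $-Q$.
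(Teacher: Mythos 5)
Your proposal is correct and follows essentially the same route as the paper: a direct term-by-term verification using the chain rule, transposition, and the pairing $(1\leftrightarrow 2,\;3\leftrightarrow 6,\;4\leftrightarrow 5)$, with the homogeneous boundary conditions read off from the definitions and the algebraic condition obtained by transposing the original one using $Q=Q^{\intercal}$. The paper's proof is exactly this computation written out in full.
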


\begin{proof}
The ODE \eqref{Equation:ODE} is satisfied by $\{\omega_{1}^{1}(\tau ),\allowbreak\omega_{2}^{1}(\tau ),\allowbreak\omega_{3}^{1}(\tau),\allowbreak\omega_{4}^{1}(\tau),\allowbreak\omega_{5}^{1}(\tau ),\allowbreak\omega_{6}^{1}(\tau )\}$:
\begin{align*}
	 \dot{\omega }_{1}^{1}(\tau )  =&-\dot{\omega }_{2}^{0}{{(h-\tau )}^{\intercal}} \\ 
	     =&\omega _{1}^{0}{{(h-\tau )}^{\intercal}}A_{1}^{{}}+\omega _{2}^{0}{{(h-\tau )}^{\intercal}}A_{0}^{{}}+\omega _{5}^{0}{{(h-\tau )}^{\intercal}}B_{d}^{{}} \\
	     &+\omega _{6}^{0}{{(h-\tau )}^{\intercal}}B_{d}^{{}} \\ 
	     =&\omega _{1}^{1}(\tau ){{A}_{0}}+\omega _{2}^{1}(\tau ){{A}_{1}}+\omega _{3}^{1}(\tau ){{B}_{d}}+\omega _{4}^{1}(\tau ){{B}_{d}}, \\ 
	\dot{\omega }_{2}^{1}(\tau ) =& -\dot{\omega }_{1}^{0}{{(h-\tau )}^{\intercal}} \\ 
	      =&-A_{1}^{\intercal}\omega _{1}^{1}(\tau )-A_{0}^{\intercal}\omega _{2}^{1}(\tau )-B_{d}^{\intercal}\omega _{5}^{1}(\tau )-B_{d}^{\intercal}\omega _{6}^{1}(\tau ), \\ 
	 \dot{\omega }_{3}^{1}(\tau ) =&-\dot{\omega }_{6}^{0}{{(h-\tau )}^{\intercal}}=-\omega _{6}^{0}{{(h-\tau )}^{\intercal}}A_{d}^{{}}+\omega _{2}^{0}{{(h-\tau )}^{\intercal}}C_{d}^{{}} \\ 
	      =&-\omega _{3}^{1}(\tau ){{A}_{d}}+\omega _{1}^{1}(\tau )C_{d}^{{}}, \\ 
	\dot{\omega }_{4}^{1}(\tau ) =&-\dot{\omega }_{5}^{0}{{(h-\tau )}^{\intercal}}=-\omega _{4}^{1}(\tau ){{A}_{d}}-\omega _{2}^{1}(\tau )C_{d}^{{}}{{e}^{-{{A}_{d}}h}}, \\ 
	\dot{\omega }_{5}^{1}(\tau ) =&-\dot{\omega }_{4}^{0}{{(h-\tau )}^{\intercal}} \\
	     =&A_{d}^{\intercal}\omega _{4}^{0}{{(h-\tau )}^{\intercal}}+{{({{e}^{-{{A}_{d}}h}})}^{\intercal}}C_{d}^{\intercal}\omega _{2}^{0}{{(h-\tau )}^{\intercal}} \\ 
	      =&A_{d}^{\intercal}\omega _{5}^{1}(\tau )+{{({{e}^{-{{A}_{d}}h}})}^{\intercal}}C_{d}^{\intercal}\omega _{1}^{1}(\tau ), \\ 
	\dot{\omega }_{6}^{1}(\tau ) =&-\dot{\omega }_{3}^{0}{{(h-\tau )}^{\intercal}}=A_{d}^{\intercal}\omega _{6}^{1}(\tau )-C_{d}^{\intercal}\omega _{2}^{1}(\tau ).  
\end{align*}

Moreover, the boundary conditions \eqref{Equation:AlgebraicODE}-\eqref{Equation:Boundary3456ODE} are also satisfied:
\begin{align*}
	-Q=&\left[\begin{array}{l} {A_{0}^{\intercal} \omega _{2}^{0} (h)+A_{1}^{\intercal} \omega _{1}^{0} (h)+B_{d}^{\intercal} \omega _{6}^{0} (h)+B_{d}^{\intercal} \omega _{5}^{0} (h)} \\ {+\omega _{2}^{0} (0)A_{1}^{} +\omega _{1}^{0} (0)A_{0}^{} +\omega _{4}^{0} (0)B_{d}^{} +\omega _{3}^{0} (0)B_{d}^{} } \end{array}\right]^{\intercal}\\
	=&\omega_{1}^{1}(0)A_{0} +\omega_{2}^{1}(0)A_{1} +\omega_{3}^{1}(0)B_{d} +\omega_{4}^{1}(0)B_{d}  \\ 
	&+A_{1}^{\intercal} \omega _{1}^{1} (h)+A_{0}^{\intercal} \omega _{2}^{1} (h)+B_{d}^{\intercal} \omega _{5}^{1} (h)+B_{d}^{\intercal} \omega _{6}^{1} (h), \\[5pt] 
	\mathbf{0}=& \omega_{1}^{1}(0)-\omega_{2}^{1}(h)=\omega_{2}^{0}(h)^{\intercal} -\omega_{1}^{0}(0)^{\intercal} ,\\
	\mathbf{0}=&\omega_{3}^{1}(0)=\omega _{6}^{0}(h)^{\intercal},\; \mathbf{0}=\omega_{5}^{1}(0)=\omega _{4}^{0}(h)^{\intercal} , \\ 
	\mathbf{0}=&\omega_{4}^{1}(h)=\omega _{5}^{0}(0)^{\intercal},\; \mathbf{0}=\omega _{6}^{1}(h)=\omega_{3}^{0}(0)^{\intercal}.
\end{align*}
\end{proof}

\begin{lem}	 \label{Lemma:FlippedUnique}
If $\left\{\omega_{1}(\tau ),\omega_{2} (\tau ),\omega_{3} (\tau ),\omega_{4} (\tau ),\omega_{5} (\tau ),\omega_{6}(\tau )\right\}$ is a unique solution to the ODEc \eqref{Equation:ODEc}, then
\begin{multline} \label{Equation:FlippedUnique} 
	\{\omega_{2}(h-\tau )^{\intercal} ,\omega_{1}(h-\tau )^{\intercal} ,\omega_{6} (h-\tau )^{\intercal} ,\omega_{5}(h-\tau )^{\intercal} ,\omega_{4}(h-\tau )^{\intercal}, \\
	\omega _{3}(h-\tau )^{\intercal} \} = \{\omega_{1}(\tau ),\omega_{2}(\tau ),\omega_{3}(\tau ),\omega_{4}(\tau ),\omega_{5}(\tau ),\omega_{6}(\tau )\}.
\end{multline}

\noindent Moreover, it follows that
\begin{equation} \label{Equation:ZeroSymmetric} 
	\omega _{1}(0)^{\intercal} =\omega_{1}(0).
\end{equation} 
\end{lem}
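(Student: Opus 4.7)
The plan is to leverage \Cref{Lemma:FlippedSolution} together with the uniqueness hypothesis. \Cref{Lemma:FlippedSolution} asserts that the transformation $\tau \mapsto h-\tau$ combined with transposition and the specified permutation of the six components maps any solution of the ODEc \eqref{Equation:ODEc} to another solution. Applying this transformation to the hypothesized unique solution $\{\omega_{1}(\tau),\ldots,\omega_{6}(\tau)\}$ yields a second solution, namely $\{\omega_{2}(h-\tau)^{\intercal},\omega_{1}(h-\tau)^{\intercal},\omega_{6}(h-\tau)^{\intercal},\omega_{5}(h-\tau)^{\intercal},\omega_{4}(h-\tau)^{\intercal},\omega_{3}(h-\tau)^{\intercal}\}$.

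Since the ODEc is assumed to admit only one solution, these two six-tuples must coincide componentwise for all $\tau$, which is precisely \eqref{Equation:FlippedUnique}. This step requires no computation beyond invoking the hypothesis and \Cref{Lemma:FlippedSolution}; the bulk of the work has already been done in proving that lemma.

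To derive \eqref{Equation:ZeroSymmetric}, I would specialize the first-component equality in \eqref{Equation:FlippedUnique} at $\tau=0$, obtaining $\omega_{2}(h)^{\intercal}=\omega_{1}(0)$. The boundary condition \eqref{Equation:Boundary12ODE} of the ODEc gives $\omega_{1}(0)=\omega_{2}(h)$, so substituting yields $\omega_{1}(0)^{\intercal}=\omega_{2}(h)^{\intercal}=\omega_{1}(0)$.

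There is no real obstacle here; the argument is a two-line deduction from the previous lemma plus the split-boundary condition \eqref{Equation:Boundary12ODE}. The only thing worth double-checking is the bookkeeping of indices in the flipped tuple to ensure that the first-component equation indeed reads $\omega_{1}(\tau)=\omega_{2}(h-\tau)^{\intercal}$, which when evaluated at $\tau=0$ gives the needed identity combining with \eqref{Equation:Boundary12ODE}.
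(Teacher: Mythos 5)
Your argument is correct and is essentially the same as the paper's: invoke \Cref{Lemma:FlippedSolution} to produce the flipped tuple as a second solution, use uniqueness to force equality, and then combine the first-component identity $\omega_{1}(0)=\omega_{2}(h)^{\intercal}$ with the boundary condition \eqref{Equation:Boundary12ODE} to obtain \eqref{Equation:ZeroSymmetric}. The index bookkeeping you flag does check out, so nothing further is needed.
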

\begin{proof}
From \Cref{Lemma:FlippedSolution}, the left-hand side of \eqref{Equation:FlippedUnique} is a solution to the ODEc \eqref{Equation:ODEc}. The uniquness of solutions then implies that \eqref{Equation:FlippedUnique} holds. Equation \eqref{Equation:ZeroSymmetric} follows from \eqref{Equation:Boundary12ODE} and \eqref{Equation:FlippedUnique}, namely $\omega _{2}(h)=\omega _{1}(0)$ and $\omega _{2}(h)=\omega_{1}(0)^{\intercal} $.
\end{proof}

\begin{lem} \label{Lemma:Zeroing}
	Let $\{\omega _{1}^{} (\tau ),\allowbreak\omega _{2}^{} (\tau ),\omega _{3}^{} (\tau ),\allowbreak\omega _{4}^{} (\tau ),\allowbreak\omega _{5}^{} (\tau ),\allowbreak\omega _{6}^{} (\tau )\}$ be a solution to the ODEc \eqref{Equation:ODEc} for some $Q$. If $(\forall \tau, \omega_{1}(\tau )=\mathbf{0} \; \lor \; \omega_{2}(\tau )=\mathbf{0})$, then $Q=\mathbf{0}$.
\end{lem}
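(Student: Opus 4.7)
The plan is to first exploit the real-analyticity of solutions to the constant-coefficient linear ODE \eqref{Equation:ODE} to promote the \emph{pointwise} disjunction in the hypothesis to the \emph{global} disjunction $\omega_1 \equiv \mathbf{0}$ or $\omega_2 \equiv \mathbf{0}$, and then to deduce $Q = \mathbf{0}$ from the boundary conditions in each of the two resulting cases.

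For the reduction, I would observe that \eqref{Equation:ODE} is a linear ODE with constant coefficients in $(\Omega_1,\ldots,\Omega_6)$, so every solution is entrywise real-analytic on all of $\mathbb{R}$. In particular, for any scalar entry of $\omega_1$ that is not identically zero, the identity theorem for real-analytic functions forces its zero set to be discrete and hence countable; therefore $\{\tau : \omega_1(\tau) = \mathbf{0}\}$ is countable whenever $\omega_1 \not\equiv \mathbf{0}$, and analogously for $\omega_2$. The hypothesis states that $\{\tau:\omega_1(\tau)=\mathbf{0}\}\cup\{\tau:\omega_2(\tau)=\mathbf{0}\} = \mathbb{R}$, which is uncountable, so at least one of $\omega_1,\omega_2$ must be identically $\mathbf{0}$.

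If $\omega_1 \equiv \mathbf{0}$, the third and fifth equations of \eqref{Equation:ODE} reduce to the homogeneous linear ODEs $\dot\omega_3 = -\omega_3 A_d$ and $\dot\omega_5 = A_d^{\intercal}\omega_5$ with the zero initial conditions $\omega_3(0)=\omega_5(0)=\mathbf{0}$ from \eqref{Equation:Boundary3456ODE}, giving $\omega_3 \equiv \omega_5 \equiv \mathbf{0}$; also $\omega_2(h) = \omega_1(0) = \mathbf{0}$ by \eqref{Equation:Boundary12ODE}. Evaluating the first equation of \eqref{Equation:ODE} at $\tau=0$ and using $\dot\omega_1(0)=\mathbf{0}$ yields $\mathbf{0} = \omega_2(0)A_1 + \omega_4(0)B_d$, so every term on the right side of \eqref{Equation:AlgebraicODE} vanishes and $Q=\mathbf{0}$. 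Symmetrically, if $\omega_2 \equiv \mathbf{0}$, the fourth and sixth equations of \eqref{Equation:ODE} together with the terminal conditions $\omega_4(h)=\omega_6(h)=\mathbf{0}$ give $\omega_4 \equiv \omega_6 \equiv \mathbf{0}$; \eqref{Equation:Boundary12ODE} gives $\omega_1(0)=\mathbf{0}$; and evaluating the second equation of \eqref{Equation:ODE} at $\tau=h$ gives $A_1^{\intercal}\omega_1(h) + B_d^{\intercal}\omega_5(h) = \mathbf{0}$, which again collapses \eqref{Equation:AlgebraicODE} to $Q=\mathbf{0}$. (Alternatively, the second case can be derived from the first via \Cref{Lemma:FlippedSolution}, since flipping preserves $Q$ when $Q=Q^{\intercal}$.)

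The main obstacle is the analyticity reduction: the hypothesis permits $\omega_1$ and $\omega_2$ to each be nonzero on an a priori arbitrary partition of $\mathbb{R}$, and only the real-analyticity implicit in the constant-coefficient structure of \eqref{Equation:ODE} collapses that possibility to the clean dichotomy $\omega_1\equiv\mathbf{0}$ or $\omega_2\equiv\mathbf{0}$. After that, the two cases are straightforward boundary-condition bookkeeping using the subsystems for $\omega_3,\omega_4,\omega_5,\omega_6$.
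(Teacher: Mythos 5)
Your proof is correct, but it differs from the paper's in two respects. First, you treat the hypothesis literally as the \emph{pointwise} disjunction ($\forall \tau$, $\omega_1(\tau)=\mathbf{0}$ or $\omega_2(\tau)=\mathbf{0}$) and use real-analyticity of solutions of the constant-coefficient system \eqref{Equation:ODE} to upgrade it to the global dichotomy $\omega_1\equiv\mathbf{0}$ or $\omega_2\equiv\mathbf{0}$; the paper's proof skips this entirely and silently reads the hypothesis as the global disjunction, immediately splitting into the cases ``$\forall\tau,\ \omega_1(\tau)=\mathbf{0}$'' and ``$\forall\tau,\ \omega_2(\tau)=\mathbf{0}$''. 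Your reduction is sound (a not-identically-zero real-analytic entry has a countable zero set, so the union of the two zero sets cannot be all of $\mathbb{R}$), and it makes the lemma valid under the weaker, literal reading. Second, within each case your bookkeeping is more economical than the paper's: where the paper integrates the remaining homogeneous subsystem (backward from $\tau=h$ in the first case, forward from $\tau=0$ in the second) to conclude that \emph{all} remaining components $\omega_2,\omega_4,\omega_6$ (resp.\ $\omega_1,\omega_3,\omega_5$) vanish identically before invoking \eqref{Equation:AlgebraicODE}, you simply evaluate the first equation of \eqref{Equation:ODE} at $\tau=0$ (resp.\ the second at $\tau=h$) to kill the residual pair $\omega_2(0)A_1+\omega_4(0)B_d$ (resp.\ $A_1^{\intercal}\omega_1(h)+B_d^{\intercal}\omega_5(h)$) as a sum. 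One caveat: the paper's longer route has a side benefit---it shows that under either case the entire solution is trivial, which is the form in which \Cref{Lemma:Zeroing} is actually invoked inside the sufficiency part of \Cref{Theorem:UniqueODESpectrum} (``otherwise $\forall i,\ \omega_i(\tau)=\mathbf{0}$''). Your proof establishes the stated conclusion $Q=\mathbf{0}$ but not that by-product, so if the lemma were replaced by your argument verbatim, the theorem's proof would need the extra integration step (or your boundary evaluations supplemented by it) restored at that point. Also, a minor wording slip: in your first case the terms $\omega_2(0)A_1$ and $\omega_4(0)B_d$ need not vanish individually, only their sum does; the conclusion is unaffected.
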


\begin{proof}
From \eqref{Equation:Boundary3456ODE}, $\omega_{3}(0)=\mathbf{0}$ and $\omega_{5}(0)=\mathbf{0}$. Moreover, if $\forall \tau, \omega _{1} (\tau )=\mathbf{0}$, then $\forall \tau, \omega_{3}(\tau ) = \mathbf{0}, \omega_{5}(\tau ) = \mathbf{0}$. The following dynamics remain:
\begin{align*}
	\dot{\omega }_{2} (\tau ) = & -A_{0}^{\intercal} \omega _{2} (\tau )-B_{d}^{\intercal} \omega _{6} (\tau ), \\ 
	\dot{\omega }_{4} (\tau ) = & -\omega _{4} (\tau )A_{d} -\omega _{2} (\tau )C_{d}^{} e^{-A_{d} h} , \\ 
	\dot{\omega }_{6} (\tau ) = & A_{d}^{\intercal} \omega _{6} (\tau )-C_{d}^{\intercal} \omega _{2} (\tau ), \\ 
	\mathbf{0} = & \omega _{1}(0)=\omega_{2} (h), \; \mathbf{0} = \omega _{4} (h),\; \mathbf{0} = \omega_{6}(h), 
\end{align*}

\noindent which when integrated backward or forward from $\tau = h$ results $\forall \tau$, $\omega _{2} (\tau )=\mathbf{0}, \omega _{4}(\tau ) = \mathbf{0}$, and $\omega_{6}(\tau ) = \mathbf{0}$. From \eqref{Equation:AlgebraicODE}, it follows that $Q=\mathbf{0}$.

Similarly, from \eqref{Equation:Boundary3456ODE}, $\omega_{4}(h)=\mathbf{0}$ and $\omega_{6}(h)=\mathbf{0}$. Moreover, if $\forall \tau, \omega _{2} (\tau )=\mathbf{0}$, then $\forall \tau, \omega_{4}(\tau ) = \mathbf{0}, \omega_{6}(\tau ) = \mathbf{0}$. The following dynamics remain:
\begin{align*}
	\dot{\omega }_{1} (\tau ) =& \omega _{1} (\tau )A_{0} +\omega _{3} (\tau )B_{d} ,\\
	\dot{\omega }_{3} (\tau ) =& -\omega _{3} (\tau )A_{d} +\omega _{1} (\tau )C_{d}^{} ,\\
	\dot{\omega }_{5} (\tau ) =& A_{d}^{\intercal} \omega _{5} (\tau )+(e^{-A_{d} h} )^{\intercal} C_{d}^{\intercal} \omega _{1} (\tau ),\\
	\mathbf{0}=& \omega _{2} (h)=\omega _{1} (0),\; \mathbf{0}=\omega _{3} (0),\; \mathbf{0}=\omega _{5} (0),
\end{align*}

\noindent which when integrated backward or forward from $\tau = 0$ results $\forall \tau$, $\omega _{1} (\tau )=\mathbf{0}, \omega _{3}(\tau ) = \mathbf{0}$, and $\omega_{5}(\tau ) = \mathbf{0}$. From \eqref{Equation:AlgebraicODE}, it follows that $Q=\mathbf{0}$.
\end{proof}

\begin{lem} \label{Lemma:QZero}
If $\{\omega _{1}^{} (\tau ),\allowbreak\omega _{2}^{} (\tau ),\omega _{3}^{} (\tau ),\allowbreak\omega _{4}^{} (\tau ),\allowbreak\omega _{5}^{} (\tau ),\allowbreak\omega _{6}^{} (\tau )\}$ is a solution to the ODEc \eqref{Equation:ODEc} for $Q=\mathbf{0}$, then $\forall \tau, \omega _{1}^{} (\tau )=\omega _{2}^{} (\tau +h)$.
\end{lem}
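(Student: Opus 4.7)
The plan is to exploit the extra structure that $Q=\mathbf{0}$ grants together with the collapsed form from \Cref{Lemma:CollapseSubsystems}. Specializing \eqref{Equation:BoundaryDerivative12} to $Q=\mathbf{0}$ gives $\dot\omega_1(0) = \dot\omega_2(h)$, which, combined with \eqref{Equation:Boundary12NoDerivative}, provides both value and first-derivative matching at the junction between the intervals $[-h,0]$ and $[0,h]$.

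I would then define the piecewise function $P\colon[-h,h]\to\mathbb{R}^{n\times n}$ by $P(\tau) := \omega_1(\tau)$ for $\tau\in[0,h]$ and $P(\tau) := \omega_2(\tau+h)$ for $\tau\in[-h,0]$. Continuity at $\tau=0$ follows from \eqref{Equation:Boundary12NoDerivative}, and the matched derivatives above promote this to $C^1$ matching. Substituting this $P$ into \eqref{Equation:Collapsed34}, the two integrals fuse into the single integral $\int_{-h}^{0} P(\tau+\theta)A_D(\theta)\,d\theta$ since $\omega_1(\tau+\theta)=P(\tau+\theta)$ on $\theta\in[-\tau,0]$ and $\omega_2(\tau+\theta+h)=P(\tau+\theta)$ on $\theta\in[-h,-\tau]$; together with $\omega_2(\tau)=P(\tau-h)$, this shows that $P$ obeys the DDE \eqref{Equation:DynamicDDE} on $[0,h]$. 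An analogous manipulation of \eqref{Equation:Collapsed56} evaluated at the shifted argument $\tau+h$ (for $\tau\in[-h,0]$), after a transposition, yields the companion identity that is precisely what $P(\tau)^{\intercal}$ would satisfy under the symmetry \eqref{Equation:SymmetricDDE}.

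To conclude, I would propagate the identity $\omega_1(\tau)=\omega_2(\tau+h)$ outward from the junction. Both $\omega_1(\cdot)$ and the extended $\omega_2(\cdot+h)$, viewed as functions on $[0,h]$, satisfy the DDE \eqref{Equation:DynamicDDE} with the same history $P$ on $[-h,0]$ (they share initial value by \eqref{Equation:Boundary12NoDerivative} and first derivative by the $Q=\mathbf{0}$ version of \eqref{Equation:BoundaryDerivative12}). Uniqueness for the DDE then yields $\omega_1(\tau)=\omega_2(\tau+h)$ on $[0,h]$, and the 6-equation ODE \eqref{Equation:ODE}, which uniquely extends both sides to all of $\mathbb{R}$, pushes the equality to every $\tau\in\mathbb{R}$.

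The main obstacle I anticipate is precisely this last uniqueness/propagation step: making rigorous that the ODE-continuation of $\omega_2$ past $h$ still satisfies the DDE form with history on $[-h,0]$. This is where $Q=\mathbf{0}$ is indispensable — without the derivative match $\dot\omega_1(0)=\dot\omega_2(h)$ the two representations of $P$ would not glue $C^{1}$-smoothly at the junction, and the DDE-uniqueness argument would collapse. I expect the cleanest route is to mirror the collapsed-form derivation of \Cref{Lemma:CollapseSubsystems} on a window shifted by $h$, so that the identity $\omega_1=\omega_2(\cdot+h)$ follows on successive intervals of length $h$ by the same integral bookkeeping.
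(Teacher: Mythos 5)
Your opening moves match the paper's: from \eqref{Equation:Boundary12NoDerivative} and the $Q=\mathbf{0}$ case of \eqref{Equation:BoundaryDerivative12} you get that $\omega_1$ and $\omega_2(\cdot+h)$ agree to first order at $\tau=0$, and your observation that the two integrals in \eqref{Equation:Collapsed34} fuse so that $\omega_1$ satisfies the forward DDE \eqref{Equation:DynamicDDE} on $[0,h]$ with history $\omega_2(\cdot+h)$ is correct. But the argument then rests on the claim that $\omega_2(\cdot+h)$, continued past $\tau=0$ by the ODE \eqref{Equation:ODE}, \emph{also} satisfies that same forward DDE with that same history — and this is a genuine gap, not a technicality. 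The dynamics of $\omega_2$ are the adjoint equations \eqref{Equation:Collapsed56}: $\dot\omega_2(\tau)=-A_1^{\intercal}\omega_1(\tau)-A_0^{\intercal}\omega_2(\tau)-\cdots$, with transposed coefficients acting on the left and a coupling to $\omega_1$ at arguments \emph{ahead} of $\tau$. This is structurally a different functional equation from \eqref{Equation:DynamicDDE}, and $C^1$ matching at the single junction point cannot promote a solution of one equation to a solution of the other; two functions can osculate to first order at a point and diverge immediately because they obey different dynamics. You correctly identify this as the main obstacle, but the proposed remedy ("mirror the collapsed-form derivation on a window shifted by $h$") is a statement of intent rather than an argument, and it is precisely where the work lives. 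Note also that this lemma must hold for \emph{any} solution with $Q=\mathbf{0}$ — it is invoked in \Cref{Theorem:UniqueODESpectrum} exactly when uniqueness of the ODEc fails — so nothing resembling uniqueness of the ODEc may be used; only the Volterra-type uniqueness of the DDE initial-history problem is available, and that only helps once both candidates are shown to solve the same DDE.

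The paper closes this gap by brute force rather than by a gluing argument: it sets $\Delta(\tau)=\omega_1(\tau)-\omega_2(\tau+h)$, establishes $\Delta(0)=\dot{\Delta}(0)=\mathbf{0}$ (your starting point), and then differentiates the first two equations of \eqref{Equation:ODE} once more to obtain a second-order equation \eqref{Equation:DeltaDoubleDotSimplified} for $\Delta$ whose forcing terms are rewritten, via the integral representations \eqref{Equation:CollapsedSubsystems}, entirely in terms of seven auxiliary convolution variables $\delta_1,\dots,\delta_7$ in \eqref{Equation:SmallDeltas1-6}. Each $\delta_i$ satisfies a linear ODE \eqref{Equation:SmallDeltas1-6Derivatives}, \eqref{Equation:SmallDelta7Derivative} driven only by $\Delta$, $\dot{\Delta}$ and the $\delta_i$ themselves, and all vanish at $\tau=0$; the resulting closed finite-dimensional linear system with zero initial data forces $\Delta\equiv\mathbf{0}$. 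That computation — in particular showing that the residual $\varepsilon(\tau)$ in \eqref{Equation:EpsilonTau} collapses to $B_d^{\intercal}\delta_7(\tau)B_d$ with $\delta_7$ itself obeying a homogeneous linear ODE — is the "integral bookkeeping" your proposal defers, and it is the substantive content of the proof. To complete your route you would need an equivalent amount of work to show the adjoint-to-forward identification, so I would regard the proposal as an outline with its central step missing.
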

\begin{proof}
Let $\Delta (\tau )=\omega_{1}(\tau )-\omega_{2}(\tau +h)$, we then wish to show that $\forall \tau \in \mathbb{R}, \Delta (\tau )=\mathbf{0}$. From \eqref{Equation:AlgebraicODE}-\eqref{Equation:Boundary12ODE}, it follows  that
\begin{equation} \label{Equation:DeltaZero}
	\Delta (0)=\mathbf{0}, \dot{\Delta }(0)=\mathbf{0}.
\end{equation}

We next find the dynamics governing $\Delta(\tau)$. For $\omega_{1}(\tau)$, we have
\begin{align*}
	 {{{\ddot{\omega }}}_{1}}(\tau ) =&{{{\dot{\omega }}}_{1}}(\tau ){{A}_{0}}+{{{\dot{\omega }}}_{2}}(\tau ){{A}_{1}}+{{{\dot{\omega }}}_{3}}(\tau ){{B}_{d}}+{{{\dot{\omega }}}_{4}}(\tau ){{B}_{d}} \\ 
	 =&{{{\dot{\omega }}}_{1}}(\tau ){{A}_{0}}-A_{1}^{\intercal}{{\omega }_{1}}(\tau ){{A}_{1}}-A_{0}^{\intercal}{{\omega }_{2}}(\tau ){{A}_{1}} \\ 
	& -B_{d}^{\intercal}{{\omega }_{5}}(\tau ){{A}_{1}}-B_{d}^{\intercal}{{\omega }_{6}}(\tau ){{A}_{1}}+{{{\dot{\omega }}}_{3}}(\tau ){{B}_{d}}+{{{\dot{\omega }}}_{4}}(\tau ){{B}_{d}}.
\end{align*}

\noindent From $A_{0}^{\intercal} \omega _{2} (\tau )A_{1} =A_{0}^{\intercal} \dot{\omega }_{1} (\tau )-A_{0}^{\intercal} \omega _{1} (\tau )A_{0} -A_{0}^{\intercal} \omega _{3} (\tau )B_{d} -A_{0}^{\intercal} \omega _{4} (\tau )B_{d} $, we have
\begin{align*}
	\ddot{\omega}_{1}(\tau)  =&\dot{\omega}_{1}(\tau ){{A}_{0}}-A_{0}^{\intercal}{{{\dot{\omega }}}_{1}}(\tau )+A_{0}^{\intercal}{{\omega }_{1}}(\tau ){{A}_{0}} \\
	&+A_{0}^{\intercal}\left[ {{\omega }_{3}}(\tau )+{{\omega }_{4}}(\tau ) \right]{{B}_{d}}-A_{1}^{\intercal}{{\omega }_{1}}(\tau ){{A}_{1}} \\ 
	& -B_{d}^{\intercal}\left[ \omega _{5}(\tau )+\omega_{6}(\tau ) \right]A_{1}+\left[ \dot{\omega}_{3}(\tau )+\dot{\omega}_{4}(\tau ) \right]B_{d}.  
\end{align*}

\noindent Similarly, we have
\begin{align*}
	{{{\ddot{\omega }}}_{2}}(\tau )=&{{{\dot{\omega }}}_{2}}(\tau ){{A}_{0}}-A_{0}^{\intercal}{{{\dot{\omega }}}_{2}}(\tau )+A_{0}^{\intercal}{{\omega }_{2}}(\tau ){{A}_{0}}\\
	&+B_{d}^{\intercal}\left[ {{\omega }_{5}}(\tau )+{{\omega }_{6}}(\tau ) \right]{{A}_{0}} -A_{1}^{\intercal}{{\omega }_{2}}(\tau ){{A}_{1}}\\
	&-A_{1}^{\intercal}\left[ {{\omega }_{3}}(\tau )+{{\omega }_{4}}(\tau ) \right]{{B}_{d}}-B_{d}^{\intercal}\left[ {{{\dot{\omega }}}_{5}}(\tau )+{{{\dot{\omega }}}_{6}}(\tau ) \right].
\end{align*}
\noindent This implies that
\begin{equation} \label{Equation:DeltaDoubleDot}
	\begin{aligned}
		\ddot{\Delta }(\tau )=&\dot{\Delta }(\tau ){{A}_{0}}-A_{0}^{\intercal}\dot{\Delta }(\tau )+A_{0}^{\intercal}\Delta (\tau ){{A}_{0}}-A_{1}^{\intercal}\Delta (\tau ){{A}_{1}} \\ 
		& +A_{0}^{\intercal}\left[ {{\omega }_{3}}(\tau )+{{\omega }_{4}}(\tau ) \right]{{B}_{d}}-B_{d}^{\intercal}\left[ {{\omega }_{5}}(\tau )+{{\omega }_{6}}(\tau ) \right]{{A}_{1}} \\ 
		& +\left[ {{{\dot{\omega }}}_{3}}(\tau )+{{{\dot{\omega }}}_{4}}(\tau ) \right]{{B}_{d}}-B_{d}^{\intercal}\left[ {{\omega }_{5}}(\tau +h)+{{\omega }_{6}}(\tau +h) \right]{{A}_{0}} \\ 
		& +A_{1}^{\intercal}\left[ {{\omega }_{3}}(\tau +h)+{{\omega }_{4}}(\tau +h) \right]{{B}_{d}}\\
		&+B_{d}^{\intercal}\left[ {{{\dot{\omega }}}_{5}}(\tau +h)+{{{\dot{\omega }}}_{6}}(\tau +h) \right].  
	\end{aligned}
\end{equation}

\noindent From \Cref{Lemma:CollapseSubsystems}, $\omega _{3} (\tau )$ to $\omega _{6} (\tau )$ are function of $\omega _{1} (\tau )$ and $\omega _{2} (\tau )$. We therefore have
\begin{align*}
	{{\omega }_{3}}(\tau )+{{\omega }_{4}}(\tau ) =& \smallint\limits_{-\tau }^{0}{\Delta (\theta +\tau )C_{d}^{{}}{{e}^{{{A}_{d}}\theta }}d\theta } \\
	&+\smallint\limits_{-h}^{0}{{{\omega }_{2}}(\theta +\tau +h)C_{d}^{{}}{{e}^{{{A}_{d}}\theta }}d\theta }, \\ 
	{{\omega }_{5}}(\tau )+{{\omega }_{6}}(\tau ) =& \smallint\limits_{-h}^{\tau -h}{{{\left( C_{d}^{{}}{{e}^{{{A}_{d}}\theta }} \right)}^{\intercal}}\Delta (\tau -\theta -h)d\theta } \\
	&+\smallint\limits_{-h}^{0}{{{\left( C_{d}^{{}}{{e}^{{{A}_{d}}\theta }} \right)}^{\intercal}}{{\omega }_{2}}(\tau -\theta )d\theta },  
\end{align*}

\begin{align*}
	{{\omega }_{3}}(\tau +h)+{{\omega }_{4}}(\tau +h) =& \smallint\limits_{-\tau -h}^{-h}{\Delta (\theta +\tau +h)C_{d}^{{}}{{e}^{{{A}_{d}}\theta }}d\theta } \\ 
	& +\smallint\limits_{-h}^{0}{{{\omega }_{1}}(\theta +\tau +h)C_{d}^{{}}{{e}^{{{A}_{d}}\theta }}d\theta }, \\ 
	{{\omega }_{5}}(\tau +h)+{{\omega }_{6}}(\tau +h) =& \smallint\limits_{0}^{\tau }{{{\left( C_{d}^{{}}{{e}^{{{A}_{d}}\theta }} \right)}^{\intercal}}\Delta (\tau -\theta )d\theta } \\ 
	& +\smallint\limits_{-h}^{0}{{{\left( C_{d}^{{}}{{e}^{{{A}_{d}}\theta }} \right)}^{\intercal}}{{\omega }_{1}}(\tau -\theta )d\theta },  
\end{align*}

\begin{align*}
	{{{\dot{\omega }}}_{3}}(\tau ) &={{\omega }_{1}}(0)C_{d}^{{}}{{e}^{-{{A}_{d}}\tau }}+\smallint\limits_{-\tau }^{0}{{{{\dot{\omega }}}_{1}}(\theta +\tau )C_{d}^{{}}{{e}^{{{A}_{d}}\theta }}d\theta }, \\ 
	{{{\dot{\omega }}}_{4}}(\tau ) &=-{{\omega }_{2}}(h)C_{d}^{{}}{{e}^{-{{A}_{d}}\tau }}+\smallint\limits_{-h}^{-\tau }{{{{\dot{\omega }}}_{2}}(\theta +\tau +h)C_{d}^{{}}{{e}^{{{A}_{d}}\theta }}d\theta },  
\end{align*}

\begin{align*}
	{{{\dot{\omega }}}_{3}}(\tau )+{{{\dot{\omega }}}_{4}}(\tau ) =&\smallint\limits_{-\tau }^{0}{\dot{\Delta }(\theta +\tau )C_{d}^{{}}{{e}^{{{A}_{d}}\theta }}d\theta }\\
	&+\smallint\limits_{-h}^{0}{{{{\dot{\omega }}}_{2}}(\theta +\tau +h)C_{d}^{{}}{{e}^{{{A}_{d}}\theta }}d\theta }, \\ 
	{{{\dot{\omega }}}_{5}}(\tau +h)+{{{\dot{\omega }}}_{6}}(\tau +h) =&\smallint\limits_{0}^{\tau }{{{\left( C_{d}^{{}}{{e}^{{{A}_{d}}\theta }} \right)}^{\intercal}}\dot{\Delta }(\tau -\theta )d\theta } \\ 
	& +\smallint\limits_{-h}^{0}{{{\left( C_{d}^{{}}{{e}^{{{A}_{d}}\theta }} \right)}^{\intercal}}{{{\dot{\omega }}}_{1}}(\tau -\theta )d\theta }.
\end{align*}
Therefore, \eqref{Equation:DeltaDoubleDot} becomes
\begin{equation} \label{Equation:DeltaDoubleDotSimplified}
	\begin{aligned} 
		 \ddot{\Delta }(\tau ) &=\dot{\Delta }(\tau ){{A}_{0}}-A_{0}^{\intercal}\dot{\Delta }(\tau )+A_{0}^{\intercal}\Delta (\tau ){{A}_{0}}-A_{1}^{\intercal}\Delta (\tau ){{A}_{1}} \\
		 &+A_{0}^{\intercal}{{\delta }_{1}}(\tau ){{B}_{d}}-B_{d}^{\intercal}{{\delta }_{2}}(\tau ){{A}_{1}}+{{\delta }_{3}}(\tau ){{B}_{d}}\\
		 &-B_{d}^{\intercal}{{\delta }_{4}}(\tau ){{A}_{0}}+A_{1}^{\intercal}{{\delta }_{5}}(\tau ){{B}_{d}}+B_{d}^{\intercal}{{\delta }_{6}}(\tau ) +\varepsilon (\tau ),
	\end{aligned}
\end{equation}

\noindent where
\begin{equation} \label{Equation:SmallDeltas1-6} 
	\begin{aligned}
		{{\delta }_{1}}(\tau ) &=\smallint\limits_{-\tau }^{0}{\Delta (\theta +\tau )C_{d}^{{}}{{e}^{{{A}_{d}}\theta }}d\theta }, \\ 
		{{\delta }_{2}}(\tau ) &=\smallint\limits_{-h}^{\tau -h}{{{\left( C_{d}^{{}}{{e}^{{{A}_{d}}\theta }} \right)}^{\intercal}}\Delta (\tau -\theta -h)d\theta }, \\ 
		{{\delta }_{3}}(\tau ) &=\smallint\limits_{-\tau }^{0}{\dot{\Delta }(\theta +\tau )C_{d}^{{}}{{e}^{{{A}_{d}}\theta }}d\theta },\\
		{{\delta }_{4}}(\tau ) &=\smallint\limits_{0}^{\tau }{{{\left( C_{d}^{{}}{{e}^{{{A}_{d}}\theta }} \right)}^{\intercal}}\Delta (\tau -\theta )d\theta }, \\ 
		{{\delta }_{5}}(\tau ) &=\smallint\limits_{-\tau -h}^{-h}{\Delta (\theta +\tau +h)C_{d}^{{}}{{e}^{{{A}_{d}}\theta }}d\theta },\text{ } \\ 
		{{\delta }_{6}}(\tau ) &=\smallint\limits_{0}^{\tau }{{{\left( C_{d}^{{}}{{e}^{{{A}_{d}}\theta }} \right)}^{\intercal}}\dot{\Delta }(\tau -\theta )d\theta },
	\end{aligned}
\end{equation}

\noindent and
\begin{multline} \label{Equation:EpsilonTau}
		 \varepsilon (\tau ) =\smallint\limits_{-h}^{0}[{{{\dot{\omega }}}_{2}}(\theta +\tau +h)+A_{0}^{\intercal}{{\omega }_{2}}(\theta +\tau +h) \\
		 + A_{1}^{\intercal}{{\omega }_{1}}(\theta +\tau +h)]{{A}_{D}}(\theta )d\theta \\ 
		 +\smallint\limits_{-h}^{0}{{{A}_{D}}{{(\theta )}^{\intercal}}[ {{{\dot{\omega }}}_{1}}(\tau -\theta )-{{\omega }_{1}}(\tau -\theta ){{A}_{0}}-{{\omega }_{2}}(\tau -\theta ){{A}_{1}} ]d\theta }.  
\end{multline}

\noindent Note that from \eqref{Equation:SmallDeltas1-6} $\delta _{1} (0)=\cdots=\delta _{6} (0)=\mathbf{0}$, and that
\begin{equation} \label{Equation:SmallDeltas1-6Derivatives} 
		\begin{aligned}
			\dot{\delta }_{1} (\tau ) &=-\delta _{1} (\tau )A_{d} -\Delta (\tau )C_{d} ,\\
			\dot{\delta }_{2} (\tau ) &=A_{d}^{\intercal} \delta _{2} (\tau )+\left(C_{d}^{} e^{-A_{d} h} \right)^{\intercal} \Delta (\tau ), \\
			\dot{\delta }_{3} (\tau ) &=-\delta _{3} (\tau )A_{d}^{} +\dot{\Delta }(\tau )C_{d} ,\\
			\dot{\delta }_{4} (\tau ) &=A_{d}^{\intercal} \delta _{4} (\tau )+C_{d}^{\intercal} \Delta (\tau ), \\ 
			\dot{\delta }_{5} (\tau ) &=-\delta _{5} (\tau )A_{d}^{} +\Delta (\tau )C_{d}^{} e^{-A_{d} h} ,\\
			\dot{\delta }_{6} (\tau ) &=A_{d}^{\intercal} \delta _{6} (\tau )+C_{d}^{\intercal} \dot{\Delta }(\tau ). 
		\end{aligned}
\end{equation}

\noindent Substitute for $\dot{\omega }_{1} (\tau -\theta )$ and $\dot{\omega }_{2} (\theta +\tau +h)$ in \eqref{Equation:EpsilonTau} using \eqref{Equation:Collapsed34} and \eqref{Equation:Collapsed56} from \Cref{Lemma:CollapseSubsystems} respectively,  to get $\varepsilon (\tau )=B_{d}^{\intercal} \delta _{7} (\tau )B_{d}^{} $ where
\begin{align*}
	{{\delta }_{7}}(\tau ) =& -\smallint\limits_{-h}^{0}{\smallint\limits_{-h}^{\theta +\tau }{{{\left( C_{d}^{{}}{{e}^{{{A}_{d}}\phi }} \right)}^{\intercal}}{{\omega }_{1}}(\theta +\tau -\phi )d\phi }C_{d}^{{}}{{e}^{{{A}_{d}}\theta }}d\theta } \\ 
	& -\smallint\limits_{-h}^{0}{\smallint\limits_{\theta +\tau }^{0}{{{\left( C_{d}^{{}}{{e}^{{{A}_{d}}\phi }} \right)}^{\intercal}}{{\omega }_{2}}(\theta +\tau -\phi +h)d\phi }C_{d}^{{}}{{e}^{{{A}_{d}}\theta }}d\theta } \\ 
	& +\smallint\limits_{-h}^{0}{{{\left( C_{d}^{{}}{{e}^{{{A}_{d}}\theta }} \right)}^{\intercal}}\smallint\limits_{-\tau +\theta }^{0}{{{\omega }_{1}}(\phi +\tau -\theta )C_{d}^{{}}{{e}^{{{A}_{d}}\phi }}d\phi }d\theta } \\ 
	& +\smallint\limits_{-h}^{0}{{{\left( C_{d}^{{}}{{e}^{{{A}_{d}}\theta }} \right)}^{\intercal}}\smallint\limits_{-h}^{-\tau +\theta }{{{\omega }_{2}}(\phi +\tau -\theta +h)C_{d}^{{}}{{e}^{{{A}_{d}}\phi }}d\phi }d\theta }.  
\end{align*}

\noindent Using $\omega _{2}^{} (\tau +h)=\omega _{1}^{} (\tau )-\Delta (\tau )$, we get
\begin{align*}
	{{\delta }_{7}}(\tau ) =&-\smallint\limits_{-h}^{0}{{{\left( C_{d}^{{}}{{e}^{{{A}_{d}}\theta }} \right)}^{\intercal}}\smallint\limits_{-h}^{-\tau +\theta }{\Delta (\phi +\tau -\theta )C_{d}^{{}}{{e}^{{{A}_{d}}\phi }}d\phi }d\theta } \\ 
	& +\smallint\limits_{-h}^{0}{\smallint\limits_{\theta +\tau }^{0}{{{\left( C_{d}^{{}}{{e}^{{{A}_{d}}\phi }} \right)}^{\intercal}}\Delta (\theta +\tau -\phi )d\phi }C_{d}^{{}}{{e}^{{{A}_{d}}\theta }}d\theta }.  
\end{align*}

\noindent Changing the integration limits, we get $\forall \tau \in \mathbb{R}$:
\begin{align*}
	 {{\delta }_{7}}(\tau ) =&\smallint\limits_{\tau -h}^{0}{\smallint\limits_{-h}^{-\tau +\phi }{{{\left( C_{d}^{{}}{{e}^{{{A}_{d}}\phi }} \right)}^{\intercal}}\Delta (\tau -\phi +\theta )C_{d}^{{}}{{e}^{{{A}_{d}}\theta }}d\theta }d\phi } \\ 
	& +\smallint\limits_{0}^{\tau }{\smallint\limits_{0}^{-\tau +\phi }{{{\left( C_{d}^{{}}{{e}^{{{A}_{d}}\phi }} \right)}^{\intercal}}\Delta (\tau -\phi +\theta )C_{d}^{{}}{{e}^{{{A}_{d}}\theta }}d\theta }d\phi } \\ 
	& -\smallint\limits_{\tau -h}^{0}{\smallint\limits_{-h}^{-\tau +\theta }{{{\left( C_{d}^{{}}{{e}^{{{A}_{d}}\theta }} \right)}^{\intercal}}\Delta (\tau +\phi -\theta )C_{d}^{{}}{{e}^{{{A}_{d}}\phi }}d\phi }d\theta } \\ 
	& -\smallint\limits_{-h}^{\tau -h}{\smallint\limits_{-h}^{-\tau +\theta }{{{\left( C_{d}^{{}}{{e}^{{{A}_{d}}\theta }} \right)}^{\intercal}}\Delta (\tau +\phi -\theta )C_{d}^{{}}{{e}^{{{A}_{d}}\phi }}d\phi }d\theta },
\end{align*}
\noindent which simplifies further to
\begin{align*}
{{\delta }_{7}}(\tau ) =&\smallint\limits_{0}^{\tau }{\smallint\limits_{0}^{-\tau +\phi }{{{\left( C_{d}^{{}}{{e}^{{{A}_{d}}\phi }} \right)}^{\intercal}}\Delta (\tau -\phi +\theta )C_{d}^{{}}{{e}^{{{A}_{d}}\theta }}d\theta }d\phi } \\ 
& -\smallint\limits_{-h}^{\tau -h}{\smallint\limits_{-h}^{-\tau +\theta }{{{\left( C_{d}^{{}}{{e}^{{{A}_{d}}\theta }} \right)}^{\intercal}}\Delta (\tau +\phi -\theta )C_{d}^{{}}{{e}^{{{A}_{d}}\phi }}d\phi }d\theta }.  
\end{align*}

\noindent Note that $\delta _{7} (0)=0$. The dynamics of $\delta _{7} (\tau )$ is given by
\begin{align*}
	{{{\dot{\delta }}}_{7}}(\tau ) &=\left. \smallint\limits_{0}^{\tau }{\frac{d}{d\tau }\smallint\limits_{0}^{-\tau +\phi }{{{\left( C_{d}^{{}}{{e}^{{{A}_{d}}\phi }} \right)}^{\intercal}}\Delta (\tau -\phi +\theta )C_{d}^{{}}{{e}^{{{A}_{d}}\theta }}d\theta }d\phi } \right\} \textcircled{\raisebox{-0.9pt} {1}} \\ 
	& \left. -\smallint\limits_{-h}^{\tau -h}{\frac{d}{d\tau }\smallint\limits_{-h}^{-\tau +\theta }{{{\left( C_{d}^{{}}{{e}^{{{A}_{d}}\theta }} \right)}^{\intercal}}\Delta (\tau +\phi -\theta )C_{d}^{{}}{{e}^{{{A}_{d}}\phi }}d\phi }d\theta }. \right\} \textcircled{\raisebox{-0.9pt} {2}}
\end{align*}

\noindent The expression ${{\dot{\delta }}_{7}}(\tau )$ has two parts $\textcircled{\raisebox{-0.9pt} {1}}$ and $\textcircled{\raisebox{-0.9pt} {2}}$ that we simplify each seperately to get
\begin{align*}
	\textcircled{\raisebox{-0.9pt}{1}}  =&\smallint\limits_{0}^{\tau }{\frac{d}{d\tau }\smallint\limits_{0}^{-\tau +\phi }{{{\left( C_{d}^{{}}{{e}^{{{A}_{d}}\phi }} \right)}^{\intercal}}\Delta (\tau -\phi +\theta )C_{d}^{{}}{{e}^{{{A}_{d}}\theta }}d\theta }d\phi } \\ 
	 =&\smallint\limits_{0}^{\tau }{\smallint\limits_{0}^{-\tau +\phi }{{{\left( C_{d}^{{}}{{e}^{{{A}_{d}}\phi }} \right)}^{\intercal}}\frac{d}{d\tau }\Delta (\tau -\phi +\theta )C_{d}^{{}}{{e}^{{{A}_{d}}\theta }}d\theta }d\phi } \\ 
	 =&\smallint\limits_{0}^{\tau }{\smallint\limits_{0}^{-\tau +\phi }{\frac{d}{d\theta }\left( {{\left( C_{d}^{{}}{{e}^{{{A}_{d}}\phi }} \right)}^{\intercal}}\Delta (\tau -\phi +\theta )C_{d}^{{}}{{e}^{{{A}_{d}}\theta }} \right)d\theta }d\phi } \\ 
	& -\smallint\limits_{0}^{\tau }{\smallint\limits_{0}^{-\tau +\phi }{{{\left( C_{d}^{{}}{{e}^{{{A}_{d}}\phi }} \right)}^{\intercal}}\Delta (\tau -\phi +\theta )C_{d}^{{}}{{e}^{{{A}_{d}}\theta }}d\theta }d\phi }{{A}_{d}} \\ 
	 =&-\smallint\limits_{0}^{\tau }{{{\left( C_{d}^{{}}{{e}^{{{A}_{d}}\phi }} \right)}^{\intercal}}\Delta (\tau -\phi +0){{C}_{d}}d\phi } \\ 
	& -\smallint\limits_{0}^{\tau }{\smallint\limits_{0}^{-\tau +\phi }{{{\left( C_{d}^{{}}{{e}^{{{A}_{d}}\phi }} \right)}^{\intercal}}\Delta (\tau -\phi +\theta )C_{d}^{{}}{{e}^{{{A}_{d}}\theta }}d\theta }d\phi }{{A}_{d}},  
\end{align*}
\noindent and
\begin{align*}
	\textcircled{\raisebox{-0.9pt} {2}}  =&\smallint\limits_{-h}^{\tau -h}{{{\left( C_{d}^{{}}{{e}^{{{A}_{d}}\theta }} \right)}^{\intercal}}\Delta (\tau -h-\theta )C_{d}^{{}}{{e}^{-{{A}_{d}}h}}d\theta } \\ 
	& +\smallint\limits_{-h}^{\tau -h}{\smallint\limits_{-h}^{-\tau +\theta }{{{\left( C_{d}^{{}}{{e}^{{{A}_{d}}\theta }} \right)}^{\intercal}}\Delta (\tau +\phi -\theta )C_{d}^{{}}{{e}^{{{A}_{d}}\phi }}d\phi }d\theta }{{A}_{d}}.
\end{align*}

\noindent The dynamics for $\delta _{7} (\tau )$ is therefore governed by
\begin{equation} \label{Equation:SmallDelta7Derivative} 
\begin{aligned}
	{{{\dot{\delta }}}_{7}}(\tau ) =&-\smallint\limits_{0}^{\tau }{{{\left( C_{d}^{{}}{{e}^{{{A}_{d}}\phi }} \right)}^{\intercal}}\Delta (\tau -\phi )C_{d}^{{}}d\phi } \\ 
	& +\smallint\limits_{-h}^{\tau -h}{{{\left( C_{d}^{{}}{{e}^{{{A}_{d}}\theta }} \right)}^{\intercal}}\Delta (\tau -h-\theta )C_{d}^{{}}{{e}^{-{{A}_{d}}h}}d\theta }-{{\delta }_{7}}(\tau ){{A}_{d}} \\ 
	 =&-{{\delta }_{4}}(\tau )C_{d}^{{}}+{{\left( C_{d}^{{}}{{e}^{-{{A}_{d}}h}} \right)}^{\intercal}}{{\delta }_{4}}(\tau )C_{d}^{{}}{{e}^{-{{A}_{d}}h}}-{{\delta }_{7}}(\tau ){{A}_{d}}.  
\end{aligned}
\end{equation}
 
\noindent \Cref{Equation:DeltaDoubleDotSimplified,Equation:SmallDeltas1-6Derivatives,Equation:SmallDelta7Derivative} together with \eqref{Equation:DeltaZero} and the fact that $\delta _{1} (0)=\cdots=\delta _{7} (0)=\mathbf{0}$ form a set of differential equations with zero initial conditions. This implies that $\forall \tau \in \mathbb{R}, \; \Delta (\tau )=\mathbf{0}$.
\end{proof}
\begin{defn} \label{Definition:Spectrum}
\emph{Spectrum condition}: The spectrum of system \eqref{Equation:TimeDelaySystem} is
\[\Lambda =\left\{\forall \lambda :\det \left(\lambda I-A_{0} -e^{-\lambda h} A_{1} -\smallint\limits_{-h}^{0}e^{\lambda \theta } A_{D} (\theta )d\theta  \right)=0\right\}, \] 
and the spectrum condition states that $\forall \lambda \in \Lambda ,-\lambda \notin \Lambda $.
\end{defn}

The next theorem establishes a necessary and sufficient condition for the existence and uniqueness of solutions for the ODEc \eqref{Equation:ODEc}. Its proof is influenced by the single delay case, \textit{cf.} \cite[Chapter~2]{KharitonovBook2013}.

\begin{thm} \label{Theorem:UniqueODESpectrum}
	A unique solution to the ODEc \eqref{Equation:ODEc} exists for all $Q$ if and only if the spectrum condition in \Cref{Definition:Spectrum} is satisfied.
\end{thm}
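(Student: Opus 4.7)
The plan is to recast the ODEc \eqref{Equation:ODEc} as a two-point boundary value problem on a linear ODE of dimension $n_{s}$, and reduce unique solvability for every $Q$ to invertibility of an associated boundary operator $\mathcal{M}$. Stacking the columns of $\Omega_{1},\ldots,\Omega_{6}$ into a single vector $y(\tau)\in\mathbb{R}^{n_{s}}$ rewrites \eqref{Equation:ODE} as $\dot{y}(\tau)=\mathcal{A}\,y(\tau)$ with a constant matrix $\mathcal{A}$. The prescribed $\Omega_{3}(0)=\Omega_{5}(0)=\mathbf{0}$ eliminates $2nn_{d}$ coordinates of $y(0)$, leaving $2n^{2}+2nn_{d}$ free unknowns; substituting $y(h)=e^{\mathcal{A}h}y(0)$ into the remaining conditions \eqref{Equation:AlgebraicODE}, \eqref{Equation:Boundary12ODE}, and $\Omega_{4}(h)=\Omega_{6}(h)=\mathbf{0}$ yields exactly $2n^{2}+2nn_{d}$ scalar equations, forming a square linear system $\mathcal{M}\,\xi=r(Q)$ with $r$ affine in $\mathrm{vec}(Q)$. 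Unique solvability for every $Q$ is thus equivalent to $\mathcal{M}$ being invertible, and by squareness equivalent to the homogeneous ODEc (set $Q=\mathbf{0}$) admitting only the trivial solution.

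For necessity, the plan is to construct an explicit non-trivial homogeneous solution when the spectrum condition fails. Assume $\lambda,-\lambda\in\Lambda$; pick non-zero $u,v$ with $u^{\intercal}\Delta(-\lambda)=\mathbf{0}$ and $v^{\intercal}\Delta(\lambda)=\mathbf{0}$, and set
\begin{equation*}
P(\tau)=uv^{\intercal}e^{\lambda\tau}+vu^{\intercal}e^{-\lambda\tau}.
\end{equation*}
Direct substitution verifies that $P$ solves the homogeneous DDE \eqref{Equation:DynamicDDE} and satisfies $P(-\tau)=P(\tau)^{\intercal}$, i.e., \eqref{Equation:SymmetricDDE}. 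Moreover, for any symmetric $P$ the right-hand side of \eqref{Equation:AlgebraicDDE} simplifies to $-(\dot{P}(0)+\dot{P}(0)^{\intercal})$, and evaluating on our specific $P$ gives $\lambda(uv^{\intercal}-vu^{\intercal})+\lambda(vu^{\intercal}-uv^{\intercal})=\mathbf{0}$, so $Q=\mathbf{0}$. The correspondence developed in Section \ref{Section:RelationDDE} then lifts this $P$ to a non-trivial homogeneous ODEc solution, exhibiting $\ker\mathcal{M}\neq\{0\}$.

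For sufficiency, the plan is to start from any homogeneous ODEc solution $\{\omega_{1},\ldots,\omega_{6}\}$ and show it must be trivial. By \Cref{Lemma:QZero}, $\omega_{2}(\tau)=\omega_{1}(\tau-h)$ for all $\tau$, and substitution into \Cref{Lemma:CollapseSubsystems} forces $\omega_{1}$ to satisfy the homogeneous delay equation on $\mathbb{R}$. \Cref{Lemma:FlippedSolution} provides a second homogeneous solution $\{\tilde{\omega}_{i}\}$, and the symmetric and antisymmetric combinations $\{\omega_{i}\pm\tilde{\omega}_{i}\}/2$ are again homogeneous ODEc solutions whose first components satisfy $\omega_{1}^{\pm}(-\tau)=\pm\,\omega_{1}^{\pm}(\tau)^{\intercal}$. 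Because each $\omega_{1}^{\pm}$ arises from an autonomous finite-dimensional ODE restricted to the DDE subspace, it admits a finite modal expansion $\omega_{1}^{\pm}(\tau)=\sum_{k}M_{k}e^{\lambda_{k}\tau}$; plugging into the homogeneous DDE forces $M_{k}\Delta(\lambda_{k})=\mathbf{0}$, so $\lambda_{k}\in\Lambda$. Matching exponentials in $\omega_{1}^{\pm}(-\tau)=\pm\,\omega_{1}^{\pm}(\tau)^{\intercal}$ pairs the $\lambda_{k}$-mode with a $-\lambda_{k}$-mode of coefficient $\pm M_{k}^{\intercal}$, requiring $-\lambda_{k}\in\Lambda$ whenever $M_{k}\neq\mathbf{0}$. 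Under the spectrum condition this is impossible, so every $M_{k}=\mathbf{0}$, whence $\omega_{1}^{+}\equiv\omega_{1}^{-}\equiv\mathbf{0}$ and therefore $\omega_{1}\equiv\mathbf{0}$; \Cref{Lemma:Zeroing} then yields $\{\omega_{i}\}\equiv\mathbf{0}$.

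The main obstacle is the mode-matching argument in the sufficiency step: justifying that, after \Cref{Lemma:QZero,Lemma:CollapseSubsystems}, the modes of $\omega_{1}^{\pm}$ are controlled only by $\Lambda$ rather than by the full spectrum of $\mathcal{A}$, and handling generalized (Jordan) modes arising from multiple roots of $\det\Delta(\lambda)=0$. The argument parallels the single-delay case treated in \cite[Ch.~2]{KharitonovBook2013}, with \Cref{Lemma:CollapseSubsystems} performing the reduction that makes the distributed-delay integrals fit cleanly into the characteristic matrix $\Delta(\lambda)$.
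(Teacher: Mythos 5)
Your overall architecture coincides with the paper's: reduce unique solvability for all $Q$ to triviality of the homogeneous ($Q=\mathbf{0}$) problem via the square linear system \eqref{Equation:vecODEcCouplingConditions}, build explicit eigenmotions when the spectrum condition fails, and use a modal expansion plus the characteristic matrix to rule out nontrivial homogeneous solutions when it holds. Two of your steps are variants that do work. For necessity, you construct $P(\tau)=uv^{\intercal}e^{\lambda\tau}+vu^{\intercal}e^{-\lambda\tau}$ at the DDE level and lift it through \Cref{Lemma:DDEsolnODEsoln}; the paper instead builds $\omega_{1}(\tau)=e^{\lambda_{1}\tau}v_{2}v_{1}^{\intercal}$, $\omega_{2}(\tau+h)=\omega_{1}(\tau)$ directly and invokes \Cref{Lemma:CollapseSubsystems}. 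Both are legitimate (your sign on the identity ``RHS of \eqref{Equation:AlgebraicDDE} $=\dot P(0^{+})+\dot P(0^{+})^{\intercal}$'' is off, but the quantity vanishes either way, and $uv^{\intercal}e^{\lambda\tau}+vu^{\intercal}e^{-\lambda\tau}\not\equiv\mathbf{0}$ for nonzero $u,v$, so nontriviality survives the lift). For sufficiency, you obtain the $\pm\lambda$ pairing by symmetrizing with the flipped solution of \Cref{Lemma:FlippedSolution}, whereas the paper gets $\lambda_{j_{0}}\in\Lambda$ from the $\omega_{1}$ (right-multiplication) equation and $-\lambda_{j_{0}}\in\Lambda$ from the $\omega_{2}$ (adjoint) equation using $\omega_{1}(\tau)=\omega_{2}(\tau+h)$ from \Cref{Lemma:QZero}; your route is equivalent in substance.

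The one genuine gap is the item you flag yourself: the pure-exponential ansatz $\omega_{1}^{\pm}(\tau)=\sum_{k}M_{k}e^{\lambda_{k}\tau}$ is not available when $E$ has repeated eigenvalues, and the theorem must hold in that case too. The paper closes this by writing each component as $\omega_{i}(\tau)=\sum_{j}e^{\lambda_{j}\tau}\varphi_{i,j}(\tau)$ with $\varphi_{i,j}$ a matrix polynomial of degree $k_{\max}(j)$ \eqref{Equation:SmallOmega-i-Sum}--\eqref{Equation:SmallV-i-j-Sum}, substituting into the collapsed delay equation, isolating the coefficient of $e^{\lambda_{j_{0}}\tau}\tau^{k_{\max}(j_{0})}$, and observing that the delayed and integral terms contribute to that top power only through the leading coefficient $W_{1,j_{0},k_{\max}(j_{0})}\neq 0_{n\times n}$, yielding exactly the pure-exponential relation $W_{1,j_{0},k_{\max}(j_{0})}\bigl(\lambda_{j_{0}}I-A_{0}-e^{-\lambda_{j_{0}}h}A_{1}-\smallint_{-h}^{0}e^{\lambda_{j_{0}}\theta}A_{D}(\theta)d\theta\bigr)=0_{n\times n}$ and hence $\det\Delta(\lambda_{j_{0}})=0$; the mirror computation on $\varphi_{2,j_{0}}$ (or, in your variant, on the reflected polynomial $\varphi_{1,j_{0}}(-\tau)^{\intercal}$, whose leading coefficient is $\pm W_{1,j_{0},k_{\max}(j_{0})}^{\intercal}\neq\mathbf{0}$) gives $-\lambda_{j_{0}}\in\Lambda$. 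Your other stated worry --- that the modes might be controlled by the full spectrum of $\mathcal{A}$ rather than by $\Lambda$ --- is already resolved by your own argument: linear independence of the functions $e^{\lambda_{j}\tau}\tau^{k}$ forces each mode separately into the kernel condition, so only eigenvalues of $E$ that also annihilate the characteristic matrix can carry a nonzero coefficient. With the leading-coefficient step supplied, your proof is complete and matches the paper's.
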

\begin{proof}
Since \eqref{Equation:ODE} is a linear dynamical system with $n_{s}$ states \eqref{Equation:ns}, the existence and uniqueness of solutions to \eqref{Equation:ODEc} is equivalent to whether there exists a unique $\Omega_{1}(0)$, $\Omega_{2}(0)$, $\Omega_{3}(0)$, $\Omega_{4}(0)$, $\Omega_{5}(0)$ and $\Omega_{6}(0)$ such that the solution of \eqref{Equation:ODE} satisfies the constraints \eqref{Equation:AlgebraicODE}-\eqref{Equation:Boundary3456ODE}. This results in $n_s$ scalar linear algebraic equations with $n_s$ unknowns such as formulated in \eqref{Equation:vecODEcCouplingConditions}. Therefore, the proof utilizes the fact that the solution to \eqref{Equation:ODEc} is unique if and only if for the case $Q=\mathbf{0}$, the trivial solution is the only solution, and thus having a trivial kernel for the underlying linear system of unknowns.

\emph{Necessity}: We show that if the spectrum condition is not satisfied, then there exists a nontrivial solution for \eqref{Equation:ODEc} when $Q=\mathbf{0}$. Therefore, assume that the spectrum condition is not satisfied, then either (a) $\exists \lambda _{1} ,\lambda _{2} \in \Lambda :\lambda _{2} =-\lambda _{1} \ne 0$ or (b) $\exists \lambda _{0} \in \Lambda :\lambda _{0} =0$. For (a), this implies $\exists v_{1}^{} ,v_{2}^{} \in {\rm C}^{n} :v_{1}^{} \ne 0_{n\times 1} ,v_{2}^{} \ne 0_{n\times 1} $ and
	\[\begin{array}{l} {v_{1}^{\intercal} \left[\lambda _{1} I-A_{0} -e^{-\lambda _{1} h} A_{1} -\smallint\limits_{-h}^{0}e^{\lambda _{1} \theta } A_{D} (\theta )d\theta  \right]=0_{1\times n} ,} \\ 
	{v_{2}^{\intercal} \left[-\lambda _{1} I-A_{0} -e^{\lambda _{1} h} A_{1} -\smallint\limits_{-h}^{0}e^{-\lambda _{1} \theta } A_{D} (\theta )d\theta  \right]=0_{1\times n} .} \end{array}\] 

\noindent Let $\omega_{1}(\tau)=e^{\lambda_{1}\tau}v_{2}v_{1}^{\intercal} $ noting that $v_{1} v_{2}^{\intercal} \ne 0_{n\times n} $. Let $\omega _{2} (\tau +h)=\omega _{1} (\tau )$, then
\begin{equation} \label{Equation:SmallOmega1Dot} 
	\begin{aligned}
		{{{\dot{\omega }}}_{1}}(\tau )=&{{\lambda }_{1}}{{e}^{{{\lambda }_{1}}\tau }}{{v}_{2}}v_{1}^{\intercal} \\
		=&{{e}^{{{\lambda }_{1}}\tau }}{{v}_{2}}v_{1}^{\intercal}\left[ {{A}_{0}}+{{e}^{-{{\lambda }_{1}}h}}{{A}_{1}}+\smallint\limits_{-h}^{0}{{{e}^{{{\lambda }_{1}}\theta }}{{A}_{D}}(\theta )d\theta } \right] \\
		 =&{{\omega }_{1}}(\tau ){{A}_{0}}+{{\omega }_{2}}(\tau ){{A}_{1}} +\smallint\limits_{-h}^{-\tau }{{{\omega }_{2}}(\tau +\theta +h){{A}_{D}}(\theta )d\theta }\\ 
		& +\smallint\limits_{-\tau }^{0}{{{\omega }_{1}}(\tau +\theta ){{A}_{D}}(\theta )d\theta },  
	\end{aligned}
\end{equation}

\begin{equation} \label{Equation:SmallOmega2Dot} 
	\begin{aligned}
		 {{{\dot{\omega }}}_{2}}(\tau )&={{\lambda }_{1}}{{e}^{{{\lambda }_{1}}(\tau -h)}}{{v}_{2}}v_{1}^{\intercal} \\ 
		& ={{\left[ -{{A}_{0}}-{{e}^{{{\lambda }_{1}}h}}{{A}_{1}}-\smallint\limits_{-h}^{0}{{{e}^{-{{\lambda }_{1}}\theta }}{{A}_{D}}(\theta )d\theta } \right]}^{\intercal}}{{e}^{{{\lambda }_{1}}(\tau -h)}}{{v}_{2}}v_{1}^{\intercal} \\ 
		& =-A_{0}^{\intercal}{{\omega }_{2}}(\tau )-A_{1}^{\intercal}{{\omega }_{1}}(\tau ) \\ 
		& -\smallint\limits_{-h}^{\tau -h}{{{A}_{D}}{{(\theta )}^{\intercal}}{{\omega }_{1}}(\tau -\theta -h)d\theta }-\smallint\limits_{\tau -h}^{0}{{{A}_{D}}{{(\theta )}^{\intercal}}{{\omega }_{2}}(\tau -\theta )d\theta }.  
	\end{aligned}
\end{equation} 

\noindent Given \eqref{Equation:SmallOmega1Dot}, \eqref{Equation:SmallOmega2Dot} and that
\begin{equation} \label{Equation:SmallOmega1-2Boundary}
	\omega _{2} (h)=\omega _{1} (0),{\rm \; }\dot{\omega }_{2} (h)=\dot{\omega }_{1} (0),  
\end{equation} 
it follows from \Cref{Lemma:CollapseSubsystems} that there exists a nontrivial solution to \eqref{Equation:ODEc} for $Q=\mathbf{0}$. Similarly for (b), $\lambda _{0} =0$ and this implies $\exists v_{0}^{} \in \mathbb{C}^{n} :v_{0}^{} \ne 0_{n\times 1} $ and
	\[v_{0}^{\intercal} \left[A_{0} +A_{1} +\smallint _{-h}^{0}A_{D} (\theta )d\theta  \right]=0_{1\times n} . \] 
Let $\omega _{1} (\tau )=e^{\lambda _{0} \tau } v_{0} v_{0}^{\intercal} =v_{0} v_{0}^{\intercal} $ and $\omega _{2} (\tau +h)=\omega _{1} (\tau )$, hence
\begin{align*}
	{{{\dot{\omega }}}_{1}}(\tau )=&{{0}_{n\times n}}={{v}_{0}}v_{0}^{\intercal}\left[ {{A}_{0}}+{{A}_{1}}+\smallint\limits_{-h}^{0}{{{A}_{D}}(\theta )d\theta } \right]\\
	=&{{\omega }_{1}}(\tau ){{A}_{0}}+{{\omega }_{2}}(\tau ){{A}_{1}} \\ 
	& +\smallint\limits_{-h}^{-\tau }{{{\omega }_{2}}(\tau +\theta +h){{A}_{D}}(\theta )d\theta }+\smallint\limits_{-\tau }^{0}{{{\omega }_{1}}(\tau +\theta ){{A}_{D}}(\theta )d\theta },  \\
	{{{\dot{\omega }}}_{2}}(\tau )=&{{0}_{n\times n}}=-{{\left[ {{A}_{0}}+{{A}_{1}}+\smallint\limits_{-h}^{0}{{{A}_{D}}(\theta )d\theta } \right]}^{\intercal}}{{v}_{0}}v_{0}^{\intercal}\\
	=&-A_{0}^{\intercal}{{\omega }_{2}}(\tau )-A_{1}^{\intercal}{{\omega }_{1}}(\tau )-\smallint\limits_{-h}^{\tau -h}{{{A}_{D}}{{(\theta )}^{\intercal}}{{\omega }_{1}}(\tau -\theta -h)d\theta } \\ 
	& -\smallint\limits_{\tau -h}^{0}{{{A}_{D}}{{(\theta )}^{\intercal}}{{\omega }_{2}}(\tau -\theta )d\theta }.  
\end{align*}
\noindent and a nontrivial solution to \eqref{Equation:ODEc} for $Q=\mathbf{0}$ follows from \Cref{Lemma:CollapseSubsystems}.

\emph{Sufficiency}: By contraposition, we show that if there exists a nontrivial solution to the ODEc \eqref{Equation:ODEc} when $Q=\mathbf{0}$, then the spectrum condition is not satisfied. Let $\{\omega _{1}^{} (\tau ),\allowbreak\omega _{2}^{} (\tau ),\omega _{3}^{} (\tau ),\allowbreak\omega _{4}^{} (\tau ),\allowbreak\omega _{5}^{} (\tau ),\allowbreak\omega _{6}^{} (\tau )\}$ be a nontrivial solution to the ODEc \eqref{Equation:ODEc} with $Q=\mathbf{0}$. Since the ODE \eqref{Equation:ODE} is a linear time-invariant finite dimensional system, $\omega _{i}(\tau )$ is written in terms of the eigenmotions of ODE \eqref{Equation:ODE}
\begin{equation} \label{Equation:SmallOmega-i-Sum}
	\omega _{i} (\tau )=\sum _{j=1}^{j_{\max } }e^{\lambda _{j} \tau } \varphi_{i,j} (\tau ) ,   
\end{equation} 
where $\lambda _{j} $ are the distinct eigenvalues of the ODE \eqref{Equation:ODE}, $j_{\max } \le n_{s} $, and $\varphi_{i,j} (\cdot )$ are polynomial matrices of degree $k_{\max } (j)$ given by
\begin{equation} \label{Equation:SmallV-i-j-Sum} 
	\varphi_{i,j} (\tau )=\sum _{k=0}^{k_{\max } (j)}W_{i,j,k} \tau ^{k}. 
\end{equation} 
The dimension of $\varphi_{1,j} (\cdot )$ and $\varphi_{2,j} (\cdot )$ is $n\mathrm{\times}n$; $\varphi_{3,j} (\cdot )$ and $\varphi_{4,j} (\cdot )$ are $n\mathrm{\times}n_{d}$; and $\varphi_{5,j} (\cdot )$ and $\varphi_{6,j} (\cdot )$ are $n_{d}\mathrm{\times}n$. Lastly, $W_{i,j,k} $ are constant complex matrices with compatible dimensions to $\varphi_{i,j} (\cdot )$.

\Cref{Lemma:Zeroing} requires the nontrivial solution \eqref{Equation:SmallOmega-i-Sum} to be such that $\omega _{1} (\tau )\ne \mathbf{0}$ and $\omega _{2} (\tau )\ne \mathbf{0}$, otherwise $\forall i \in \{1,\hdots,6\},\omega _{i} (\tau )=\mathbf{0}$. From \Cref{Lemma:QZero}, $\omega _{1}^{{}}(\tau )=\omega _{2}^{{}}(\tau +h)$ which implies that
\begin{multline*}
	\sum\limits_{j=1}^{{{j}_{\max }}}{{{e}^{{{\lambda }_{j}}\tau }}\left[ {{\varphi}_{1,j}}(\tau )-{{e}^{{{\lambda }_{j}}h}}{{\varphi}_{2,j}}(\tau +h) \right]}={{0}_{n\times n}}, \\ 
	\Rightarrow \sum\limits_{j=1}^{{{j}_{\max }}}{{{e}^{{{\lambda }_{j}}\tau }}\sum\limits_{k=0}^{{{k}_{\max }}(j)}{\left[ {{W}_{1,j,k}}{{\tau }^{k}}-{{e}^{{{\lambda }_{j}}h}}{{W}_{2,j,k}}{{(\tau +h)}^{k}} \right]}}={{0}_{n\times n}}.  
\end{multline*}
	
\noindent It therefore follows that $\forall j \in \{1,\hdots,j_{max}\}$
\begin{equation} \label{Equation:SmallV-1-j} 
	\varphi_{1,j} (\tau )=e^{\lambda _{j} h} \varphi_{2,j} (\tau +h),
\end{equation} 
and that
\begin{align*}
	{{W}_{1,j,{{k}_{\max }}(j)}} &={{e}^{{{\lambda }_{j}}h}}{{W}_{2,j,{{k}_{\max }}(j)}}, \\ 
	{{W}_{1,j,({{k}_{\max }}(j)-1)}}&=h{{e}^{{{\lambda }_{j}}h}}{{W}_{2,j,({{k}_{\max }}(j)-1)}}, \\ 
	\vdots &  \\ 
	{{W}_{1,j,0}}&={{e}^{{{\lambda }_{j}}h}}{{W}_{2,j,0}}.
\end{align*}

\noindent Substituting $\omega _{2}(\tau )=\omega _{1}(\tau -h)$ in \eqref{Equation:Collapsed34}, \Cref{Lemma:CollapseSubsystems} yields
	\[\dot{\omega }_{1} (\tau )=\omega _{1} (\tau )A_{0} +\omega _{1} (\tau -h)A_{1} +\smallint\limits_{-h}^{0}\omega _{1} (\theta +\tau )A_{D} (\theta )d\theta ,\] 

\noindent which by substituting \eqref{Equation:SmallV-i-j-Sum} results in
\begin{multline*}
	0_{n\times n} = \sum\limits_{j=1}^{{j}_{\max}}[{{\lambda }_{j}}{{e}^{{{\lambda }_{j}}\tau }}{{\varphi}_{1,j}}(\tau )+{{e}^{{{\lambda }_{j}}\tau }}{{{\dot{v}}}_{1,j}}(\tau )\\
	-{{e}^{{{\lambda }_{j}}\tau }}{{\varphi}_{1,j}}(\tau ){{A}_{0}}-{{e}^{{{\lambda }_{j}}(\tau -h)}}{{\varphi}_{1,j}}(\tau -h){{A}_{1}}\\
	-\smallint\limits_{-h}^{0}{{{e}^{{{\lambda }_{j}}(\theta +\tau )}}{{\varphi}_{1,j}}(\theta +\tau ){{A}_{D}}(\theta )d\theta}].
\end{multline*}

\noindent Collecting the terms associated with $e^{\lambda _{j} \tau } $, it follows that
\begin{multline} \label{Equation:AllDistinctEigenvalues} 
	{{0}_{n\times n}} = {{\lambda }_{j}}{{\varphi}_{1,j}}(\tau )+{{{\dot{\varphi}}}_{1,j}}(\tau )-{{\varphi}_{1,j}}(\tau ){{A}_{0}}\\
	-{{e}^{-{{\lambda }_{j}}h}}{{\varphi}_{1,j}}(\tau -h){{A}_{1}} -\smallint\limits_{-h}^{0}{{{e}^{{{\lambda }_{j}}\theta }}{{\varphi}_{1,j}}(\theta +\tau ){{A}_{D}}(\theta )d\theta }.  
\end{multline} 

\noindent Note that
\begin{equation} \label{Equation:w1NotZero} 
	\omega _{1} (\tau )\ne \mathbf{0}\Rightarrow \exists j_{0} :\varphi_{1,j_{0} } (\tau )\ne \mathbf{0},  
\end{equation} 

\noindent and substituting \eqref{Equation:SmallV-i-j-Sum} in \eqref{Equation:AllDistinctEigenvalues} yields for the $e^{\lambda _{j_{0} } \tau } $ terms
\begin{multline} \label{Equation:lambdaj0}
	{{0}_{n\times n}}=\sum\limits_{k=0}^{{{k}_{\max }}({{j}_{0}})} [ {{\lambda }_{{{j}_{0}}}}{{W}_{1,{{j}_{0}},k}}{{\tau }^{k}}+k{{W}_{1,{{j}_{0}},k}}{{\tau }^{k-1}}-{{W}_{1,{{j}_{0}},k}}{{A}_{0}}{{\tau }^{k}}\\
	-{{e}^{-{{\lambda }_{{{j}_{0}}}}h}}{{W}_{1,{{j}_{0}},k}}{{A}_{1}}{{(\tau -h)}^{k}} \\ 
	 -\smallint\limits_{-h}^{0}{{{e}^{{{\lambda }_{{{j}_{0}}}}\theta }}{{W}_{1,{{j}_{0}},k}}{{A}_{D}}(\theta ){{(\theta +\tau )}^{k}}d\theta }	].
\end{multline}

\noindent Collecting the ${\tau }^{k_{\max } (j_{0} )}$ terms, it must be that
\begin{equation} \label{Equation:kmaxj0} 
	\begin{aligned}
		{{0}_{n\times n}} &={{\lambda }_{{{j}_{0}}}}{{W}_{1,{{j}_{0}},{{k}_{\max }}({{j}_{0}})}}-{{W}_{1,{{j}_{0}},{{k}_{\max }}({{j}_{0}})}}{{A}_{0}} \\ 
		& -{{e}^{-{{\lambda }_{{{j}_{0}}}}h}}{{W}_{1,{{j}_{0}},{{k}_{\max }}({{j}_{0}})}}{{A}_{1}}-\smallint\limits_{-h}^{0}{{{e}^{{{\lambda }_{{{j}_{0}}}}\theta }}{{W}_{1,{{j}_{0}},{{k}_{\max }}({{j}_{0}})}}{{A}_{D}}(\theta )d\theta } \\ 
		& ={{W}_{1,{{j}_{0}},{{k}_{\max }}({{j}_{0}})}}\left[   {{\lambda }_{{{j}_{0}}}}I-{{A}_{0}}-{{e}^{-{{\lambda }_{{{j}_{0}}}}h}}{{A}_{1}} \right. \\
		&\left. -\smallint\limits_{-h}^{0}{{{e}^{{{\lambda }_{{{j}_{0}}}}\theta }}{{A}_{D}}(\theta )d\theta } \right]  .  
	\end{aligned}
\end{equation}
Note that $W_{1,j_{0} ,k_{\max } (j_{0} )} \ne 0_{n\times n} $ because \eqref{Equation:SmallV-i-j-Sum} is of degree $k_{\max } (j_{0} )$. Equation \eqref{Equation:kmaxj0} implies that there exists a nonzero row of $W_{i,j_{0} ,k_{\max } (j_{0} )} $ which is orthogonal to every column of
\begin{equation} \label{Equation:lambdaj0Column} 
	\lambda _{j_{0} } I-A_{0} -e^{-\lambda _{j_{0} } h} A_{1} -\smallint\limits_{-h}^{0}e^{\lambda _{j_{0} } \theta } A_{D} (\theta )d\theta  .  
\end{equation} 
This requires the dimension of the column space of \eqref{Equation:lambdaj0Column} to be less than $n$, and it must be that
	\[\det \left(\lambda _{j_{0} } I-A_{0} -e^{-\lambda _{j_{0} } h} A_{1} -\smallint\limits_{-h}^{0}e^{\lambda _{j_{0} } \theta } A_{D} (\theta )d\theta  \right)=0.\] 
Hence $\lambda _{j_{0} } $ is an eigenvalue of \eqref{Equation:TimeDelaySystem} in addition to being an eigenvalue of the ODE \eqref{Equation:ODE}. Next it is shown that $-\lambda _{j_{0} } $ is also an eigenvalue of \eqref{Equation:TimeDelaySystem}. Substituting $\omega _{2}(\tau +h)=\omega _{1}(\tau )$ in \eqref{Equation:Collapsed56}, \Cref{Lemma:CollapseSubsystems} yields
	\[\dot{\omega }_{2} (\tau )=-A_{1}^{\intercal} \omega _{2}^{} (\tau +h)-A_{0}^{\intercal} \omega _{2} (\tau )-\smallint\limits_{-h}^{0}A_{D} (\theta )^{\intercal} \omega _{2} (\tau -\theta )d\theta  .\] 
Note that from \eqref{Equation:SmallV-1-j} and \eqref{Equation:w1NotZero}, we have for $j_{0}$
\begin{align*}
	{{\varphi}_{1,{{j}_{0}}}}(\tau ) &={{e}^{{{\lambda }_{{{j}_{0}}}}h}}{{\varphi}_{2,{{j}_{0}}}}(\tau +h),\\
	{{W}_{2,{{j}_{0}},{{k}_{\max }}({{j}_{0}})}}&={{e}^{-{{\lambda }_{{{j}_{0}}}}h}}{{W}_{1,{{j}_{0}},{{k}_{\max }}({{j}_{0}})}}\ne {{0}_{n\times n}}.
\end{align*}	

\noindent Similar to \eqref{Equation:AllDistinctEigenvalues}, we get 
\begin{multline*}
	{{0}_{n\times n}}=-{{\lambda }_{j_0}}{{\varphi}_{2,j_0}}(\tau )-{{{\dot{v}}}_{2,j_0}}(\tau )-A_{0}^{\intercal}{{\varphi}_{2,j_0}}(\tau )-{{e}^{{{\lambda }_{j_0}}h}}A_{1}^{\intercal}{{\varphi}_{2,j_0}}(\tau +h)\\
	-\smallint\limits_{-h}^{0}{{{A}_{D}}{{(\theta )}^{\intercal}}{{e}^{-{{\lambda }_{j_0}}\theta }}{{\varphi}_{2,j_0}}(\tau -\theta )d\theta }.
\end{multline*}

\noindent Repeating steps as in \eqref{Equation:lambdaj0} and \eqref{Equation:kmaxj0}, we get
\begin{multline*}
	{{W}_{2,{{j}_{0}},{{k}_{\max }}({{j}_{0}})}}\left( -{{\lambda }_{{{j}_{0}}}}I-A_{0}^{\intercal}-{{e}^{{{\lambda }_{{{j}_{0}}}}h}}A_{1}^{\intercal} \right. \\
	\left. -\smallint\limits_{-h}^{0}{{{A}_{D}}{{(\theta )}^{\intercal}}{{e}^{-{{\lambda }_{{{j}_{0}}}}\theta }}d\theta } \right)={{0}_{n\times n}},
\end{multline*}

\noindent and hence
	\[\det \left(-\lambda _{j_{0} } I-A_{0}^{\intercal} -e^{\lambda_{j_{0} }h} A_{1}^{\intercal} -\smallint\limits_{-h}^{0}A_{D} (\theta )^{\intercal} e^{-\lambda _{j_{0} } \theta } d\theta  \right)=0.\] 
\noindent Hence $-\lambda _{j_{0} } $ is an eigenvalue of \eqref{Equation:TimeDelaySystem}, and the spectrum condition is not satisfied. Note that if $\lambda _{j_{0} } =0=-\lambda _{j_{0} } $, then we get
\begin{align*}
	{{0}_{n\times n}}={{W}_{1,{{j}_{0}},{{k}_{\max }}({{j}_{0}})}}\left( -{{A}_{0}}-{{A}_{1}}-\smallint\limits_{-h}^{0}{{{A}_{D}}(\theta )d\theta } \right),  
\end{align*}
and
	\[\det \left(-A_{0} -A_{1} -\smallint\limits_{-h}^{0}A_{D} (\theta )d\theta  \right)=0,\] 
\noindent hence $0 \in \Lambda$ and the spectrum condition is not satisfied.
\end{proof}

\subsection{Relation to the DDEc:} \label{Section:RelationDDE}
The DDE \eqref{Equation:DynamicDDE} requires $\forall \tau \in [0,h]$ values of $P(\cdot )$ from $[-h,0]$ acting as an initial function. Our objective is to eliminate the requirement for an initial function and rely on initial conditions only. A step in that direction is to use the symmetry property \eqref{Equation:SymmetricDDE} to rewrite the DDE \eqref{Equation:DynamicDDE} such that it reads from the positive time interval $[0,h]$ only. This introduces counter flow terms that we eliminate by introducing auxiliary variables. The problem then becomes an initial value problem for a linear time-invariant system. This is the spirit of the next results.

\begin{lem} \label{Lemma:DDEsolnODEsoln}
If a solution to the DDEc \eqref{Equation:DDEc} exists, then a solution to the ODEc \eqref{Equation:ODEc} exists.
\end{lem}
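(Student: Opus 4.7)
The plan is to construct a candidate solution $(\Omega_1, \ldots, \Omega_6)$ to the ODEc directly from the given DDEc solution $P$, and to verify the ODEc relations by appealing to \Cref{Lemma:CollapseSubsystems}. For $\tau \in [0, h]$ I would define
\[\Omega_1(\tau) := P(\tau), \qquad \Omega_2(\tau) := P(\tau - h),\]
which are well-defined since both arguments lie in $[-h, h]$ where $P$ is determined by \eqref{Equation:DynamicDDE} together with the symmetry \eqref{Equation:SymmetricDDE}. The auxiliary functions $\Omega_3, \Omega_4, \Omega_5, \Omega_6$ are then defined by the integral formulas \eqref{Equation:Collapsed3}--\eqref{Equation:Collapsed6}. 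All six functions are extended to $\mathbb{R}$ as the unique solution of the linear, finite-dimensional ODE \eqref{Equation:ODE} with the just-computed values at $\tau = 0$ as initial data.

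By \Cref{Lemma:CollapseSubsystems} it suffices to verify that $(\Omega_1, \Omega_2)$ satisfies the collapsed relations \eqref{Equation:Collapsed34}--\eqref{Equation:Boundary12NoDerivative}, since \eqref{Equation:CollapsedSubsystems} holds by construction. Equation \eqref{Equation:Collapsed34} follows from the DDE \eqref{Equation:DynamicDDE} directly, after splitting its distributed-delay integral at $\theta = -\tau$: on $[-\tau, 0]$ the argument $\tau + \theta \in [0, \tau]$ so $P(\tau+\theta) = \Omega_1(\tau+\theta)$, while on $[-h, -\tau]$ the argument lies in $[-h, 0]$ and equals $\Omega_2(\tau+\theta+h)$. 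Equation \eqref{Equation:Collapsed56} is obtained by writing the DDE at the shifted point $h - \tau \in [0, h]$, transposing, and using $P(-s) = P(s)^\intercal$ to reexpress all $P$-terms via $\Omega_1$ and $\Omega_2$, again with the distributed integral split appropriately. The consistency \eqref{Equation:Boundary12NoDerivative} is immediate from $\Omega_1(0) = P(0) = \Omega_2(h)$, and the jump condition \eqref{Equation:BoundaryDerivative12} follows by computing $\dot\Omega_1(0) = \dot P(0^+)$ from the DDE and $\dot\Omega_2(h) = \dot P(0^-)$ from the symmetric extension: their difference is precisely the right-hand side of \eqref{Equation:AlgebraicDDE}, hence $-Q$.

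The main obstacle I anticipate lies in the derivation of \eqref{Equation:Collapsed56}: it combines transposition of the DDE, the symmetry property, and splitting of the distributed-delay integral at $\theta = -h + \tau$, with all integration limits and transposes tracked carefully so that terms with arguments in $[0, h]$ are rewritten via $\Omega_1$ and those in $[-h, 0]$ via $\Omega_2$. Relatedly, \eqref{Equation:BoundaryDerivative12} relies on the observation that the DDE defines $\dot P$ only for $\tau > 0$, and that the algebraic constraint \eqref{Equation:AlgebraicDDE} is exactly the prescription of the discontinuity of $\dot P$ across $\tau = 0$ induced by the symmetric extension \eqref{Equation:SymmetricDDE}. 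Once these are in hand, \Cref{Lemma:CollapseSubsystems} delivers the full ODEc and the existence claim follows.
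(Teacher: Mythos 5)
Your proposal is correct and follows essentially the same route as the paper: the same definitions $\Omega_1=P$ and $\Omega_2(\tau)=P(\tau-h)=P(h-\tau)^{\intercal}$, the same integral formulas for $\Omega_3,\ldots,\Omega_6$, the same use of the symmetry \eqref{Equation:SymmetricDDE} to fold negative-time values of $P$ into $[0,h]$, and the same identification of \eqref{Equation:AlgebraicDDE} with the jump $-Q=\dot{P}(0^{+})-\dot{P}(0^{-})=\dot{\Omega}_1(0)-\dot{\Omega}_2(h)$. The only cosmetic difference is that you verify the collapsed relations and invoke \Cref{Lemma:CollapseSubsystems}, whereas the paper checks the ODE \eqref{Equation:ODE} and the boundary conditions \eqref{Equation:AlgebraicODE}--\eqref{Equation:Boundary3456ODE} directly before extending the solution to all of $\mathbb{R}$.
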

\begin{proof}
We show that from an existing DDEc \eqref{Equation:DDEc} solution, we construct a solution to the ODEc \eqref{Equation:ODEc} on $[0,h]$ then extend it to $\forall \tau \in \mathbb{R}$. 

First, consider the terms of \eqref{Equation:DynamicDDE} and apply the symmetry property \eqref{Equation:SymmetricDDE} to write the counterflow backward-in-time term $P(\tau -h)=P(h-\tau )^{\intercal}$. Similarly for
\begin{multline*}
	\smallint\limits_{-h}^{0}{P(\tau +\theta ){{A}_{D}}(\theta )d\theta }=\smallint\limits_{-\tau }^{0}{P(\underbrace{\tau +\theta }_{\ge 0}){{A}_{D}}(\theta )d\theta }\\
	+\smallint\limits_{-h}^{-\tau }{P(\underbrace{\tau +\theta }_{\le 0}){{A}_{D}}(\theta )d\theta },
\end{multline*}

\noindent and noting that
	\[\smallint\limits_{-h}^{-\tau }{P(\underbrace{\tau +\theta }_{\le 0}){{A}_{D}}(\theta )d\theta }=\smallint\limits_{-h}^{-\tau }{P{{(\underbrace{-\tau -\theta }_{\ge 0})}^{\intercal}}{{A}_{D}}(\theta )d\theta },\]
one can write
\begin{align*}
	\smallint\limits_{-h}^{0}{P(\tau +\theta ){{A}_{D}}(\theta )d\theta } &=\smallint\limits_{-\tau }^{0}{P(\tau +\theta ){{A}_{D}}(\theta )d\theta } \\ 
	& +\smallint\limits_{-h}^{-\tau }{P{{(h-(\tau +\theta +h))}^{\intercal}}{{A}_{D}}(\theta )d\theta }.
\end{align*}

\noindent Equation \eqref{Equation:DynamicDDE} now includes two counterflow terms becoming
\begin{equation} \label{Equation:TwoCounterFlowTerms} 
	\begin{aligned}
		\dot{P}(\tau )&=P(\tau ){{A}_{0}}+P{{(h-\tau )}^{\intercal}}{{A}_{1}}+\smallint\limits_{-\tau }^{0}{P(\tau +\theta ){{A}_{D}}(\theta )d\theta } \\ 
		& +\smallint\limits_{-h}^{-\tau }{P{{(h-(\tau +\theta +h))}^{\intercal}}{{A}_{D}}(\theta )d\theta }.
	\end{aligned}
\end{equation}

\noindent The dynamics governing both $P(\tau )$ and $P(h-\tau )^{\intercal} $ is
\begin{subequations} \label{Equation:CounterflowDynamics} 
	\begin{align}
		\begin{split} \label{Equation:CounterflowDynamicsFirst}		
			\dot{P}(\tau )=&P(\tau ){{A}_{0}}+P{{(h-\tau )}^{\intercal}}{{A}_{1}}\\
			&+\smallint\limits_{-\tau }^{0}{P(\tau +\theta )C_{d}^{{}}{{e}^{{{A}_{d}}\theta }}d\theta }B_{d}^{{}} \\ 
			& +\smallint\limits_{-h}^{-\tau }{P{{(h-(\tau +\theta +h))}^{\intercal}}C_{d}^{{}}{{e}^{{{A}_{d}}\theta }}d\theta }B_{d}^{{}}, 
		\end{split}\\
		\begin{split} \label{Equation:CounterflowDynamicsSecond}
			\frac{d}{d\tau }P{{(h-\tau )}^{\intercal}}=&-{{A}_{1}}^{\intercal}P(\tau )-{{A}_{0}}^{\intercal}P{{(h-\tau )}^{\intercal}} \\ 
			& -B_{d}^{\intercal}\smallint\limits_{-h}^{-h+\tau }{{{\left( C_{d}^{{}}{{e}^{{{A}_{d}}\theta }} \right)}^{\intercal}}P(\tau -\theta -h)d\theta } \\ 
			& -B_{d}^{\intercal}\smallint\limits_{-h+\tau }^{0}{{{\left( C_{d}^{{}}{{e}^{{{A}_{d}}\theta }} \right)}^{\intercal}}P{{(h-(\tau -\theta ))}^{\intercal}}d\theta }.
		\end{split}
	\end{align}
\end{subequations}

If $p(\tau)$ is a solution to \eqref{Equation:DDEc}, then it also satisfies \eqref{Equation:CounterflowDynamics}. Let $\{\omega _{1}^{} (\tau ),\allowbreak\omega _{2}^{} (\tau ),\omega _{3}^{} (\tau ),\allowbreak\omega _{4}^{} (\tau ),\allowbreak\omega _{5}^{} (\tau ),\allowbreak\omega _{6}^{} (\tau )\}$ be such that
\begin{align*}
	{\omega_1}(\tau ) =& p(\tau ), \\
	{\omega_2}(\tau ) =& p{{(h-\tau )}^{\intercal}}, \\ 
	{\omega_3}(\tau ) =& \smallint\limits_{-\tau }^{0}{p(\tau +\theta )C_{d}^{{}}{{e}^{{{A}_{d}}\theta }}d\theta }, \\
	{\omega_4}(\tau ) =& \smallint\limits_{-h}^{-\tau }{p{{(h-(\tau +\theta +h))}^{\intercal}}C_{d}^{{}}{{e}^{{{A}_{d}}\theta }}d\theta }, \\ 
	{\omega_5}(\tau ) =& \smallint\limits_{-h}^{-h+\tau }{{{\left( C_{d}^{{}}{{e}^{{{A}_{d}}\theta }} \right)}^{\intercal}}p(\tau -\theta -h)d\theta },\\
	{\omega_6}(\tau ) =& \smallint\limits_{-h+\tau }^{0}{{{\left( C_{d}^{{}}{{e}^{{{A}_{d}}\theta }} \right)}^{\intercal}}p{{(h-(\tau -\theta ))}^{\intercal}}d\theta },  
\end{align*}

\noindent we then have the following dynamics
\begin{subequations} \label{Equation:CounterflowDynamicsAuxiliary}
	\begin{align}
		\begin{split} \label{Equation:CounterflowDynamicsAuxiliaryA}	
			{{\dot{\omega}}_{3}}(\tau )&=\frac{d}{d\tau }\smallint\limits_{-\tau }^{0}{p(\tau +\theta )C_{d}^{{}}{{e}^{{{A}_{d}}\theta }}d\theta }\\
			&=-{\omega_3}(\tau ){{A}_{d}}+{\omega_1}(\tau )C_{d}^{{}},
		\end{split}\\
		{{\dot{\omega}}_{4}}(\tau )&=-{\omega_4}(\tau ){{A}_{d}}-{\omega_2}(\tau )C_{d}^{{}}{{e}^{-{{A}_{d}}h}}, \label{CounterflowDynamicsAuxiliaryB} \\
		{{\dot{\omega}}_{5}}(\tau )&=A_{d}^{\intercal}{\omega_5}(\tau )+{{\left( C_{d}^{{}}{{e}^{-{{A}_{d}}h}} \right)}^{\intercal}}{\omega_1}(\tau ), \label{CounterflowDynamicsAuxiliaryC} \\
		{{\dot{\omega}}_{6}}(\tau )&=A_{d}^{\intercal}{\omega_6}(\tau )-C_{d}^{\intercal}{\omega_2}(\tau ).\label{Equation:CounterflowDynamicsAuxiliaryD}
	\end{align}
\end{subequations}

\noindent The dynamics \eqref{Equation:CounterflowDynamics} and \eqref{Equation:CounterflowDynamicsAuxiliary} together give
\begin{equation} \label{Equation:ResultingODEcDynamics} 
	\begin{aligned}
		&{{{\dot{\omega}}}_{1}}(\tau )={\omega_1}(\tau ){{A}_{0}}+{\omega_2}(\tau ){{A}_{1}}+{\omega_3}(\tau ){{B}_{d}}+{\omega_4}(\tau ){{B}_{d}}, \\ 
		&{{{\dot{\omega}}}_{2}}(\tau )=-{{A}_{1}}^{\intercal}{\omega_1}(\tau )-{{A}_{0}}^{\intercal}{\omega_2}(\tau )-B_{d}^{\intercal}{\omega_5}(\tau )-B_{d}^{\intercal}{\omega_6}(\tau ), \\ 
		&{{{\dot{\omega}}}_{3}}(\tau )=-{\omega_3}(\tau ){{A}_{d}}+{\omega_1}(\tau )C_{d}^{{}}, \\ 
		&{{{\dot{\omega}}}_{4}}(\tau )=-{\omega_4}(\tau ){{A}_{d}}-{\omega_2}(\tau )C_{d}^{{}}{{e}^{-{{A}_{d}}h}}, \\ 
		&{{{\dot{\omega}}}_{5}}(\tau )=A_{d}^{\intercal}{\omega_5}(\tau )+{{\left( C_{d}^{{}}{{e}^{-{{A}_{d}}h}} \right)}^{\intercal}}{\omega_1}(\tau ), \\ 
		&{{{\dot{\omega}}}_{6}}(\tau )=A_{d}^{\intercal}{\omega_6}(\tau )-C_{d}^{\intercal}{\omega_2}(\tau ).
	\end{aligned}
\end{equation}

\noindent From \eqref{Equation:AlgebraicDDE}, $-Q=\dot{p}(0^{+} )-\dot{p}(0^{-} )=\dot\omega_1 (0)-\dot\omega_2 (h)$ and we have
\begin{subequations} \label{Equation:ResultingODEcCouplingConditions} 
	\begin{align} \label{Equation:ResultingODEcAlgebraic} 
		\begin{split}
			-Q=&{\omega_1}(0){{A}_{0}}+{\omega_2}(0){{A}_{1}}+{\omega_3}(0){{B}_{d}}+{\omega_4}(0){{B}_{d}} \\ 
			& +{{A}_{1}}^{\intercal}{\omega_1}(h)+{{A}_{0}}^{\intercal}{\omega_2}(h)+B_{d}^{\intercal}{\omega_5}(h)+B_{d}^{\intercal}{\omega_6}(h),
		\end{split}\\
		\mathbf{0}=&{\omega_3}(0)={\omega_4}(h)={\omega_5}(0)={\omega_6}(h), \label{Equation:ResultingODEcBoundaries3-6} \\
		\mathbf{0}=&{{\omega}_{1}}(0)-{\omega_2}(h). \label{Equation:ResultingODEcBoundaries1-2}
	\end{align} 
\end{subequations}

\noindent \Cref{Equation:ResultingODEcDynamics,Equation:ResultingODEcCouplingConditions} satisfy the ODEc \eqref{Equation:ODEc} restricted on $[0,h]$. This can be extended $\forall \tau \in \mathbb{R}$ by integrating \eqref{Equation:ResultingODEcDynamics} forward from $\tau=h$ and backward from $\tau=0$. The result is a solution to the ODEc \eqref{Equation:ODEc}.
\end{proof}

\begin{lem} \label{Lemma:ODEsolnsDDEsoln}
If $\left\{\omega _{1}^{} (\tau ),\omega _{2}^{} (\tau ),\omega _{3}^{} (\tau ),\omega _{4}^{} (\tau ),\omega _{5}^{} (\tau ),\omega _{6}^{} (\tau )\right\}$ is a solution to the ODEc \eqref{Equation:ODEc}, then there exists a solution $p(\cdot)$ to the DDEc \eqref{Equation:DDEc} on $[-h,h]$ given by
	\[\forall \tau \in [0,h], \left\{\begin{aligned} p(\tau ) & = {\textstyle\frac{1}{2}} \left[\omega _{1}^{} (\tau )+\omega _{2}^{\intercal} (h-\tau )\right] \\ 
	p(-\tau ) &= {\textstyle\frac{1}{2}} \left[\omega _{1}^{} (\tau )+\omega _{2}^{\intercal} (h-\tau )\right]^{\intercal}  \end{aligned}\right\}.\]
\end{lem}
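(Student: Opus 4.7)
The plan is to first symmetrize the given ODEc solution so its structure mirrors the DDEc symmetry \eqref{Equation:SymmetricDDE}, and then verify the three DDEc equations for the resulting $p$. By \Cref{Lemma:FlippedSolution} the flipped tuple $\{\omega_2(h-\tau)^{\intercal},\omega_1(h-\tau)^{\intercal},\omega_6(h-\tau)^{\intercal},\omega_5(h-\tau)^{\intercal},\omega_4(h-\tau)^{\intercal},\omega_3(h-\tau)^{\intercal}\}$ is also an ODEc solution, and since \eqref{Equation:ODEc} is linear, averaging it with the original yields a new ODEc solution $\{\omega_i^s\}_{i=1}^{6}$. By construction $\omega_1^s(\tau)=\omega_2^s(h-\tau)^{\intercal}$ for all $\tau$, and on $[0,h]$ the lemma's formula is exactly $p(\tau)=\omega_1^s(\tau)$. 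The symmetry \eqref{Equation:SymmetricDDE} is thus built into the extension $p(-\tau):=p(\tau)^{\intercal}$.

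For the dynamic equation \eqref{Equation:DynamicDDE}, I apply the collapsed equation \eqref{Equation:Collapsed34} of \Cref{Lemma:CollapseSubsystems} to $\{\omega_1^s,\omega_2^s\}$. The lumped term $\omega_2^s(\tau)A_1$ becomes $p(\tau-h)A_1$ via $\omega_2^s(\tau)=p(h-\tau)^{\intercal}=p(\tau-h)$. In the two integrals, $\omega_1^s(\tau+\theta)=p(\tau+\theta)$ directly when $\tau+\theta\in[0,h]$, while $\omega_2^s(\tau+\theta+h)=\omega_1^s(-\tau-\theta)^{\intercal}=p(-\tau-\theta)^{\intercal}=p(\tau+\theta)$ on the complementary range by the symmetry extension of $p$. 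The two integrals then merge into $\smallint_{-h}^{0} p(\tau+\theta) A_D(\theta)\,d\theta$, which gives \eqref{Equation:DynamicDDE}.

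For the algebraic condition \eqref{Equation:AlgebraicDDE}, I invoke \eqref{Equation:AlgebraicODE} applied to $\{\omega_i^s\}$. The boundary condition \eqref{Equation:Boundary12ODE} gives $\omega_1^s(0)=\omega_2^s(h)=p(0)$, and likewise $\omega_1^s(h)=p(h)$ and $\omega_2^s(0)=p(h)^{\intercal}=p(-h)$. The remaining terms $\omega_4^s(0)B_d$ and $B_d^{\intercal}\omega_5^s(h)$ are rewritten using the collapsed expressions \eqref{Equation:Collapsed4} and \eqref{Equation:Collapsed5}; the substitutions $\omega_2^s(\theta+h)=\omega_1^s(-\theta)^{\intercal}=p(-\theta)^{\intercal}=p(\theta)$ and $\omega_1^s(-\theta)=p(-\theta)$ for $\theta\in[-h,0]$ convert them into $\smallint_{-h}^{0}p(\theta)A_D(\theta)\,d\theta$ and $\smallint_{-h}^{0}A_D(\theta)^{\intercal}p(-\theta)\,d\theta$ respectively, and collecting all terms reproduces \eqref{Equation:AlgebraicDDE}.

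The main obstacle is the change-of-variable bookkeeping: each integration bound and each argument of $\omega_1^s$, $\omega_2^s$, and $p$ must land in the correct subinterval so that the identifications $\omega_1^s|_{[0,h]}=p$, $\omega_2^s(\tau)=p(\tau-h)$, and $p(-\tau)=p(\tau)^{\intercal}$ are simultaneously applicable, allowing the split integrals arising from \Cref{Lemma:CollapseSubsystems} to merge into the single integrals over $[-h,0]$ appearing in the DDEc.
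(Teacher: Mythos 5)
Your proof is correct and follows essentially the same route as the paper: the paper defines $\eta(\tau)=\tfrac{1}{2}\left[\omega_{1}(\tau)+\omega_{2}^{\intercal}(h-\tau)\right]$ directly and carries both terms through the computation, whereas you first symmetrize the ODEc solution by averaging with the flipped solution from \Cref{Lemma:FlippedSolution} (valid since \eqref{Equation:ODEc} is affine in the unknowns for fixed $Q$) and then work with $\omega_{1}^{s}$ alone. The verification of \eqref{Equation:DynamicDDE} and \eqref{Equation:AlgebraicDDE} rests on the same ingredients in both cases---the collapsed equations of \Cref{Lemma:CollapseSubsystems}, the boundary condition \eqref{Equation:Boundary12ODE}, and the integral expressions \eqref{Equation:Collapsed4}--\eqref{Equation:Collapsed5} for $\omega_{4}(0)$ and $\omega_{5}(h)$---so your symmetrization is a clean bookkeeping device rather than a different argument.
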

\begin{proof}
Restrict the solution of the ODEc \eqref{Equation:ODEc} to $[0,h]$ and let $\forall \tau \in [0,h], \eta (\tau )={\textstyle\frac{1}{2}} \left[\omega _{1} (\tau )+\omega _{2}^{\intercal} (h-\tau )\right]$. Observe from \eqref{Equation:Boundary12ODE} that $\eta (0)={\textstyle\frac{1}{2}} \left[\omega _{1}^{} (0)+\omega _{1}^{\intercal} (0)\right]=\eta (0)^{\intercal} $. Construct $p(\cdot )$ on $(0,h]$ and $[-h,0)$ as follows
\begin{equation} \label{Equation:Construct-p} 
	\forall \tau \in (0,h], \left\{p(\tau )\buildrel\Delta\over= \eta (\tau ),{\rm \; \; }p(-\tau )\buildrel\Delta\over= \eta (\tau )^{\intercal} \right\},  
\end{equation} 
giving
\begin{equation} \label{Equation:Constructed-p-flip} 
	p(-\tau )=p(\tau )^{\intercal} \:\: \forall \tau \in [0,h],
\end{equation} 
\noindent which is the symmetry property \eqref{Equation:SymmetricDDE}. Note that \eqref{Equation:Constructed-p-flip} is not saying that $p(\tau )={\textstyle\frac{1}{2}} \left[\omega _{1}^{} (\tau )+\omega _{2}^{\intercal} (h-\tau )\right]$ on $[-h,h]$. Next, note that
\begin{align*}
	2\dot{p}(\tau ) =&\dot{\omega }_{1}^{{}}(\tau )-\dot{\omega }_{2}^{{}}{{(h-\tau )}^{\intercal}}, \\ 
	 =&{{\omega }_{1}}(\tau ){{A}_{0}}+{{\omega }_{2}}(\tau ){{A}_{1}}+{{\omega }_{1}}{{(h-\tau )}^{\intercal}}A_{1}^{{}}+{{\omega }_{2}}{{(h-\tau )}^{\intercal}}A_{0}^{{}} \\ 
	& +\smallint\limits_{-\tau }^{0}{{{\omega }_{1}}(\tau +\theta ){{A}_{D}}(\theta )d\theta }+\smallint\limits_{-h}^{-\tau }{{{\omega }_{2}}(\tau +\theta +h){{A}_{D}}(\theta )d\theta } \\ 
	& +\smallint\limits_{-h}^{-\tau }{{{\omega }_{1}}{{(-\tau -\theta )}^{\intercal}}{{A}_{D}}(\theta )d\theta }+\smallint\limits_{-\tau }^{0}{{{\omega }_{2}}{{(h-\tau -\theta )}^{\intercal}}{{A}_{D}}(\theta )d\theta },
\end{align*}

\noindent and hence
\begin{align*}
	\dot{p}(\tau )&=p(\tau )A_{0}^{{}}+p{{(h-\tau )}^{\intercal}}A_{1}^{{}} \\ 
	& +\smallint\limits_{-\tau }^{0}{p(\tau +\theta ){{A}_{D}}(\theta )d\theta }+\smallint\limits_{-h}^{-\tau }{p{{(-\tau -\theta )}^{\intercal}}{{A}_{D}}(\theta )d\theta }.
\end{align*}

\noindent Using \eqref{Equation:Constructed-p-flip}, we get
	\[\dot{p}(\tau )=p(\tau )A_{0}^{{}}+p(\tau -h)A_{1}^{{}}+\smallint\limits_{-h}^{0}{p(\tau +\theta ){{A}_{D}}(\theta )d\theta },\]
which is the dynamical relation \eqref{Equation:DynamicDDE}. It remains to show that
\begin{align*}
	-Q&=A_{0}^{\intercal}p(0)+p(0){{A}_{0}}+A_{1}^{\intercal}p(h)+p(-h){{A}_{1}} \\ 
	& +\smallint\limits_{-h}^{0}{\left[ {{A}_{D}}{{(\theta )}^{\intercal}}p(-\theta )+p(\theta ){{A}_{D}}(\theta ) \right]d\theta }. 
\end{align*}

\noindent This follows from the relations:
\begin{multline*}
	A_{0}^{\intercal}p(0)+p(0){{A}_{0}}+A_{1}^{\intercal}p(h)+p(-h){{A}_{1}}=\tfrac{1}{2}[\omega _{1}^{{}}(0){{A}_{0}}+\omega _{2}^{{}}(0){{A}_{1}}+ \\ 
	\text{  }A_{1}^{\intercal}\omega _{1}^{{}}(h)+A_{0}^{\intercal}\omega _{1}^{{}}(0)]+\tfrac{1}{2}[A_{0}^{\intercal}\omega _{1}^{\intercal}(0)+A_{1}^{\intercal}\omega _{2}^{\intercal}(0)\\
	+\omega _{1}^{\intercal}(h){{A}_{1}}+\omega _{1}^{\intercal}(0){{A}_{0}}],
\end{multline*}

\begin{multline*}
	\smallint\limits_{-h}^{0}{[ {{A}_{D}}{{(\theta )}^{\intercal}}p(-\theta )+p{{(-\theta )}^{\intercal}}{{A}_{D}}(\theta ) ]d\theta } = \tfrac{1}{2}\smallint\limits_{-h}^{0}{{A}_{D}}{{(\theta )}^{\intercal}}[ \omega_{1}(-\theta )	\\
	+\omega_{2}^{\intercal}(h+\theta ) ]+[ \omega_{1}^{\intercal}(-\theta )+\omega_{2}(h+\theta ) ]{{A}_{D}}(\theta )d\theta ,
\end{multline*}

\begin{align*}
	\smallint\limits_{-h}^{0}{{{A}_{D}}{{(\theta )}^{\intercal}}\omega _{1}^{{}}(-\theta )d\theta } &=B_{d}^{\intercal}\smallint\limits_{-h}^{0}{{{\left( C_{d}^{{}}{{e}^{{{A}_{d}}\theta }} \right)}^{\intercal}}\omega _{1}^{{}}(-\theta )d\theta }\\
	&=B_{d}^{\intercal}\omega _{5}^{{}}(h),
\end{align*}	
\begin{align*}
	\smallint\limits_{-h}^{0}{\omega _{2}^{{}}(h+\theta ){{A}_{D}}(\theta )d\theta } &=\smallint\limits_{-h}^{0}{\omega _{2}^{{}}(h+\theta )C_{d}^{{}}{{e}^{{{A}_{d}}\theta }}d\theta }{{B}_{d}}\\
	&={{\omega }_{4}}(0){{B}_{d}},  
\end{align*}

\noindent and using \eqref{Equation:AlgebraicODE} and \eqref{Equation:Boundary12ODE} to get
\begin{multline*}
	\tfrac{1}{2}[\omega_{1}(0){{A}_{0}}+\omega_{2}(0){{A}_{1}}+{{\omega }_{4}}(0){{B}_{d}}+{{A}_{1}}^{\intercal}\omega_{1}(h)+{{A}_{0}}^{\intercal}\omega _{1}(0) \\ 
	+B_{d}^{\intercal}\omega_{5}^{{}}(h)]+\tfrac{1}{2}[{{A}_{0}}^{\intercal}\omega _{1}^{\intercal}(0)+{{A}_{1}}^{\intercal}\omega _{2}^{\intercal}(0)+B_{d}^{\intercal}\omega _{4}^{\intercal}(0)\\
	+\omega _{1}^{\intercal}(h){{A}_{1}} \text{  }+\omega _{1}^{\intercal}(0){{A}_{0}}+\tfrac{1}{2}\omega _{5}^{\intercal}(h)B_{d}^{{}}]=-Q. 
\end{multline*}
\end{proof}

\begin{thm} \label{Theorem:DDEuniqueODEunique}
For all $Q$, the DDEc \eqref{Equation:DDEc} has a unique solution if and only if the ODEc \eqref{Equation:ODEc} has a unique solution.
\end{thm}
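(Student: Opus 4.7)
Since Lemmas~\ref{Lemma:DDEsolnODEsoln} and~\ref{Lemma:ODEsolnsDDEsoln} already transfer existence of solutions in both directions, the work of the theorem lies entirely in transferring \emph{uniqueness}. Because \eqref{Equation:DDEc} and \eqref{Equation:ODEc} are linear in their data with $Q$ as the only forcing term, uniqueness for every $Q$ is equivalent to the trivial solution being the only solution when $Q=\mathbf{0}$. My plan is therefore to argue both directions at the level of nontrivial $Q=\mathbf{0}$ solutions.

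For the direction ``ODEc unique $\Rightarrow$ DDEc unique'' I would take two DDEc solutions $p_1,p_2$ for the same $Q$ and push each through the construction of Lemma~\ref{Lemma:DDEsolnODEsoln}. That construction is linear and preserves the right-hand side of \eqref{Equation:AlgebraicDDE}, so it yields ODEc solutions $\omega^{(1)},\omega^{(2)}$ with the same $Q$. ODEc uniqueness forces $\omega^{(1)}=\omega^{(2)}$, and since $\omega^{(i)}_1(\tau)=p_i(\tau)$ on $[0,h]$ by Lemma~\ref{Lemma:DDEsolnODEsoln}, this gives $p_1=p_2$ on $[0,h]$; the symmetry \eqref{Equation:SymmetricDDE} then propagates equality to $[-h,0]$.

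The reverse direction I would prove by contrapositive. Assume the ODEc admits a nontrivial solution $\omega$ for $Q=\mathbf{0}$. The natural attempt is to apply Lemma~\ref{Lemma:ODEsolnsDDEsoln} and take $p(\tau)=\tfrac{1}{2}[\omega_1(\tau)+\omega_2(h-\tau)^{\intercal}]$; if $p\not\equiv\mathbf{0}$ this already witnesses DDEc non-uniqueness and we are done.

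The main obstacle is the degenerate case $p\equiv\mathbf{0}$, in which Lemma~\ref{Lemma:ODEsolnsDDEsoln} collapses to zero on the particular $\omega$ at hand. To bypass it I would invoke Theorem~\ref{Theorem:UniqueODESpectrum}: ODEc non-uniqueness forces the spectrum condition to fail, so there exist $\lambda,-\lambda\in\Lambda$ and nonzero left null-vectors $v_1,v_2\in\mathbb{C}^n$ of the two associated characteristic matrices. I would then propose the explicit ansatz
\[
p^{\ast}(\tau)=e^{\lambda\tau}\,v_2v_1^{\intercal}+e^{-\lambda\tau}\,v_1v_2^{\intercal},
\]
(replaced by $v_0v_0^{\intercal}$ when $\lambda=0$) and verify by direct substitution that it satisfies the dynamic relation \eqref{Equation:DynamicDDE}, the symmetry $p^{\ast}(-\tau)=p^{\ast}(\tau)^{\intercal}$, and the algebraic constraint \eqref{Equation:AlgebraicDDE} with $Q=\mathbf{0}$---the algebraic side collapsing because its two ``transposed halves'' evaluate to $\pm\lambda(v_2v_1^{\intercal}-v_1v_2^{\intercal})$ and cancel, mirroring the characteristic-equation computation in the necessity part of Theorem~\ref{Theorem:UniqueODESpectrum}. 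Linear independence of the two modes (together with $v_1,v_2\ne\mathbf{0}$) makes $p^{\ast}$ nontrivial; a complex $\lambda$ is handled by passing to real and imaginary parts, each a real DDEc solution. The resulting nontrivial DDEc solution for $Q=\mathbf{0}$ contradicts DDEc uniqueness and closes the contrapositive.
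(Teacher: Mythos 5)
Your proposal is correct, and while the sufficiency direction essentially coincides with the paper's argument (the paper builds one candidate $p^{1}$ from the unique ODEc solution via \Cref{Lemma:ODEsolnsDDEsoln} and \Cref{Lemma:FlippedUnique}, then maps an arbitrary $p^{2}$ forward through \Cref{Lemma:DDEsolnODEsoln}; you map both DDEc solutions forward, which avoids \Cref{Lemma:FlippedUnique} altogether), the necessity direction is where you genuinely diverge. The paper argues by contraposition in one line: a nontrivial $Q=\mathbf{0}$ ODEc solution generates, via \Cref{Lemma:ODEsolnsDDEsoln}, a DDEc solution $p^{0}$ for $Q=\mathbf{0}$, and then $p^{1}+p^{0}$ witnesses non-uniqueness. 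This tacitly assumes $p^{0}\not\equiv\mathbf{0}$, but the map $\omega\mapsto\tfrac{1}{2}[\omega_{1}(\tau)+\omega_{2}(h-\tau)^{\intercal}]$ has a nontrivial kernel on flip-antisymmetric solutions: for instance, when $\lambda,-\lambda\in\Lambda$ the combination $\omega_{1}(\tau)=e^{\lambda\tau}v_{2}v_{1}^{\intercal}-e^{-\lambda\tau}v_{1}v_{2}^{\intercal}$ is a nontrivial $Q=\mathbf{0}$ ODEc solution that maps to $p^{0}\equiv\mathbf{0}$. You correctly identify this degenerate case as the real obstacle and bypass it by routing through \Cref{Theorem:UniqueODESpectrum}: the explicit ansatz $p^{\ast}(\tau)=e^{\lambda\tau}v_{2}v_{1}^{\intercal}+e^{-\lambda\tau}v_{1}v_{2}^{\intercal}$ does satisfy \eqref{Equation:DynamicDDE} and \eqref{Equation:SymmetricDDE} by the two left-null-vector identities, satisfies \eqref{Equation:AlgebraicDDE} with $Q=\mathbf{0}$ because $\dot{p}^{\ast}(0)=\lambda(v_{2}v_{1}^{\intercal}-v_{1}v_{2}^{\intercal})$ is antisymmetric so the jump $\dot{p}(0^{+})-\dot{p}(0^{-})$ vanishes, and is nontrivial since $v_{2}v_{1}^{\intercal}\ne\mathbf{0}$ and $e^{\pm\lambda\tau}$ are independent for $\lambda\ne 0$ (with the $v_{0}v_{0}^{\intercal}$ and real/imaginary-part remarks covering $\lambda=0$ and complex $\lambda$). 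What your route buys is a watertight treatment of the degenerate case the paper skips over; what it costs is that \Cref{Theorem:DDEuniqueODEunique} now depends on \Cref{Theorem:UniqueODESpectrum} (harmless, since that theorem concerns only the ODEc and is proved independently), and loses the paper's self-contained linearity argument, which is shorter when the generated $p^{0}$ happens to be nonzero.
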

\begin{proof}
\textit{Sufficiency}: Assume the ODEc \eqref{Equation:ODEc} has a unique solution denoted by $\left\{\omega _{1} (\tau ),\omega _{2}(\tau ),\omega _{3}(\tau ),\omega_{4}(\tau ),\omega_{5}(\tau ),\omega _{6}(\tau )\right\}$. From \Cref{Lemma:ODEsolnsDDEsoln}, there exists a solution to the DDEc \eqref{Equation:DDEc} on $[-h,h]$ given by
	\[\forall \tau \in [0,h], \left\{ \begin{aligned}
	{{p}^{1}}(\tau )&=\tfrac{1}{2}\left[ \omega_{1}(\tau )+\omega _{2}^{\intercal}(h-\tau ) \right] \\ 
	{{p}^{1}}(-\tau )&=\tfrac{1}{2}{{\left[ \omega_{1}(\tau )+\omega _{2}^{\intercal}(h-\tau ) \right]}^{\intercal}}  
	\end{aligned} \right\}.\]

\noindent From \Cref{Lemma:FlippedUnique}, namely $\omega _{1} (\tau )=\omega _{2}^{\intercal} (h-\tau )$, we simplify to have
	\[\forall \tau \in [0,h], \left\{p^{1} (\tau )=\omega _{1} (\tau ),{\rm \; }p^{1} (-\tau )=\omega _{1}^{\intercal} (\tau )\right\}.\] 

\noindent Moreover, $p^{1} (\tau )$ is a unique solution to the DDEc \eqref{Equation:DDEc}. To see this, assume there exists an arbitrary solution to the DDEc \eqref{Equation:DDEc} denoted by $p^{2} (\tau )$. From \Cref{Lemma:DDEsolnODEsoln}, a solution to the ODEc \eqref{Equation:ODEc} on $\left[0,h\right]$ is given by
	\[\left\{ {{p}^{2}}(\tau ),{{p}^{2}}{{(h-\tau )}^{\intercal}},\ldots ,\smallint\limits_{-h+\tau }^{0}{{{\left( C_{d}^{{}}{{e}^{{{A}_{d}}\theta }} \right)}^{\intercal}}{{p}^{2}}{{(h-(\tau -\theta ))}^{\intercal}}d\theta } \right\}.\]
Uniqueness of solutions to the ODEc \eqref{Equation:ODEc} requires that on $\left[0,h\right]$, $p^{2} (\tau )=\omega _{1} (\tau )=p^{1} (\tau )$, and thus the DDEc \eqref{Equation:DDEc} has a unique solution.

\textit{Necessity}: This requires showing that if the DDEc \eqref{Equation:DDEc} has a unique solution then the ODEc \eqref{Equation:ODEc} has a unique solution. By contraposition, assume the ODEc \eqref{Equation:ODEc} has a nonunique solution. This means a nontrivial solution to the ODEc exists for $Q=\mathbf{0}$ which from \Cref{Lemma:ODEsolnsDDEsoln} can generate a solution to the DDEc \eqref{Equation:DDEc} with $Q=\mathbf{0}$ denoted by $p^{0} (\tau )$. This implies that given any arbitrary symmetric $Q$ and an associated solution $p^{1} (\tau )$, then $p^{1} (\tau )+p^{0} (\tau )$ is also a solution for the same $Q$ resulting in the DDEc \eqref{Equation:DDEc} having a nonunique solution.
\end{proof}

\begin{cor} \label{Corollary:DDEUniqueSolution}
For all $Q$, the DDEc \eqref{Equation:DDEc} has a unique solution if and only if the spectrum condition in \Cref{Definition:Spectrum} is satisfied.
\end{cor}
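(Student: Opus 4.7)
The plan is immediate: this corollary is a direct chaining of the two main theorems already established, so no new technical work is required.

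First I would invoke \Cref{Theorem:DDEuniqueODEunique}, which asserts that, for all $Q$, the DDEc \eqref{Equation:DDEc} has a unique solution if and only if the ODEc \eqref{Equation:ODEc} has a unique solution. This reduces the claim about the DDEc to an equivalent claim about the ODEc.

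Next I would apply \Cref{Theorem:UniqueODESpectrum}, which states that for all $Q$, the ODEc \eqref{Equation:ODEc} has a unique solution if and only if the spectrum condition of \Cref{Definition:Spectrum} holds. Composing the two biconditionals yields the desired equivalence between uniqueness of the DDEc solution for all $Q$ and the spectrum condition.

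There is no genuine obstacle here; the only thing to be careful about is that both theorems are quantified "for all $Q$", so the chain of equivalences matches quantifier-for-quantifier without any further argument. The proof can therefore be stated in one or two sentences: \emph{The result follows by combining \Cref{Theorem:DDEuniqueODEunique} and \Cref{Theorem:UniqueODESpectrum}.}
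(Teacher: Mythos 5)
Your proposal is correct and matches the paper exactly: the paper's own proof reads ``Follows directly from \Cref{Theorem:DDEuniqueODEunique,Theorem:UniqueODESpectrum}.'' Your additional remark about the ``for all $Q$'' quantifier matching across both biconditionals is a fair point of care but requires no further argument.
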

\begin{proof}
Follows directly from \Cref{Theorem:DDEuniqueODEunique,Theorem:UniqueODESpectrum}.
\end{proof}

\section{Analytic Solution} \label{Section:AnalyticSolution}

Since a solution to the DDEc \eqref{Equation:DDEc} can be derived from a solution to the ODEc \eqref{Equation:ODEc}, the objective of this section is to solve the ODEc analytically by writing it as an initial value problem. This requires solving a linear system of $n_{s}$ scalar equations. We assume in this section that the spectrum condition holds. Using $vec(A+B)=vec(A)+vec(B)$, $vec(ADB)=(B^{\intercal} \otimes A)vec(D)$ from \cite{Brewer1978}, \eqref{Equation:ODE} is equivalently written as
\small\begin{equation} \label{Equation:vecODEcDynamics} 
	\left[\begin{array}{c} {vec(\dot{\Omega }_{1} (\tau ))} \\ {\vdots } \\ {vec(\dot{\Omega }_{6} (\tau ))} \end{array}\right]=E\left[\begin{array}{c} {vec(\Omega _{1} (\tau ))} \\ {\vdots } \\ {vec(\Omega _{6} (\tau ))} \end{array}\right],  
\end{equation} 
and equations \eqref{Equation:AlgebraicODE}-\eqref{Equation:Boundary3456ODE} as
\small\begin{equation} \label{Equation:vecODEcCouplingConditions} 
	\left[\begin{array}{c} {-vec(Q)} \\ {0} \\ {\vdots } \\ {0} \end{array}\right]=F_{1} \left[\begin{array}{c} {vec(\Omega _{1} (0))} \\ {\vdots } \\ {vec(\Omega _{6} (0))} \end{array}\right]+F_{2} \left[\begin{array}{c} {vec(\Omega _{1} (h))} \\ {\vdots } \\ {vec(\Omega _{6} (h))} \end{array}\right],  
\end{equation} 
where $E$, $F_{1}$ and $F_{2}$ are $n_{s} \times n_{s} $ constant matrices as follows
\begin{align*}
	E&=\left[\begin{array}{ccc}
		 {A_{0}^{\intercal} \otimes I_{n} } & {A_{1}^{\intercal} \otimes I_{n} } & {B_{d}^{\intercal} \otimes I_{n} } \\
		 {-I_{n} \otimes A_{1}^{\intercal} } & {-I_{n} \otimes A_{0}^{\intercal} } & {\mathbf{0}} \\ 
		 {C_{d}^{\intercal} \otimes I_{n} } & {\mathbf{0}} & {-A_{d}^{\intercal} \otimes I_{n} }  \\ 
		 {\mathbf{0}} & {-\left(C_{d}^{} e^{-A_{d} h} \right)^{\intercal} \otimes I_{n} } & {\mathbf{0}}  \\ 
		 {I_{n} \otimes \left(C_{d}^{} e^{-A_{d} h} \right)^{\intercal} } & {\mathbf{0}} & {\mathbf{0}} \\ 
		 {\mathbf{0}} & {-I_{n} \otimes C_{d}^{\intercal} } & {\mathbf{0}}
	 \end{array}\right.\\
	 &\left.\begin{array}{ccc}
		{B_{d}^{\intercal} \otimes I_{n} } & {\mathbf{0}} & {\mathbf{0}} \\
		{\mathbf{0}} & {-I_{n} \otimes B_{d}^{\intercal} } & {-I_{n} \otimes B_{d}^{\intercal} } \\ 
		{\mathbf{0}} & {\mathbf{0}} & {\mathbf{0}} \\ 
		{-A_{d}^{\intercal} \otimes I_{n} } & {\mathbf{0}} & {\mathbf{0}} \\ 
		{\mathbf{0}} & {I_{n} \otimes A_{d}^{\intercal} } & {\mathbf{0}} \\ 
		{\mathbf{0}} & {\mathbf{0}} & {I_{n} \otimes A_{d}^{\intercal} } 
	\end{array}\right],
\end{align*}
	
	\[{{F}_{1}}=\left[ \begin{matrix}
	A_{0}^{\intercal}\otimes {{I}_{n}} & A_{1}^{\intercal}\otimes {{I}_{n}} & B_{d}^{\intercal}\otimes {{I}_{n}} & B_{d}^{\intercal}\otimes {{I}_{n}} & \mathbf{0} & \mathbf{0}  \\
	{{I}_{n}}\otimes {{I}_{n}} & \mathbf{0} & \mathbf{0} & \mathbf{0} & \mathbf{0} & \mathbf{0}  \\
	\mathbf{0} & \mathbf{0} & {{I}_{{{n}_{d}}}}\otimes {{I}_{n}} & \mathbf{0} & \mathbf{0} & \mathbf{0}  \\
	\mathbf{0} & \mathbf{0} & \mathbf{0} & \mathbf{0} & {{I}_{n}}\otimes {{I}_{{{n}_{d}}}} & \mathbf{0}  \\
	\mathbf{0} & \mathbf{0} & \mathbf{0} & \mathbf{0} & \mathbf{0} & \mathbf{0}  \\
	\mathbf{0} & \mathbf{0} & \mathbf{0} & \mathbf{0} & \mathbf{0} & \mathbf{0}  \\
	\end{matrix} \right],\]

	\[{{F}_{2}}=\left[ \begin{matrix}
	{{I}_{n}}\otimes A_{1}^{\intercal} & {{I}_{n}}\otimes A_{0}^{\intercal} & \mathbf{0} & \mathbf{0} & {{I}_{n}}\otimes B_{d}^{\intercal} & {{I}_{n}}\otimes B_{d}^{\intercal}  \\
	\mathbf{0} & -{{I}_{n}}\otimes {{I}_{n}} & \mathbf{0} & \mathbf{0} & \mathbf{0} & \mathbf{0}  \\
	\mathbf{0} & \mathbf{0} & \mathbf{0} & \mathbf{0} & \mathbf{0} & \mathbf{0}  \\
	\mathbf{0} & \mathbf{0} & \mathbf{0} & \mathbf{0} & \mathbf{0} & \mathbf{0}  \\
	\mathbf{0} & \mathbf{0} & \mathbf{0} & {{I}_{{{n}_{d}}}}\otimes {{I}_{n}} & \mathbf{0} & \mathbf{0}  \\
	\mathbf{0} & \mathbf{0} & \mathbf{0} & \mathbf{0} & \mathbf{0} & {{I}_{n}}\otimes {{I}_{{{n}_{d}}}}  \\
	\end{matrix} \right].\]

\noindent Note that using the ODE \eqref{Equation:vecODEcDynamics}, the linear system of unknowns \eqref{Equation:vecODEcCouplingConditions} becomes:
\begin{equation} \label{Equation:vecODEcCouplingConditionsCombined} 
	\left[\begin{array}{c} {-vec(Q)} \\ {0} \\ {\vdots } \\ {0} \end{array}\right]=\left(F_{1} +F_{2} e^{Eh} \right)\left[\begin{array}{c} {vec(\Omega _{1} (0))} \\ {\vdots } \\ {vec(\Omega _{6} (0))} \end{array}\right],  
\end{equation} 
\noindent where by the following \Cref{Corollary:SpectrumNonsingular}, $F_{1} +F_{2} e^{Eh} $ is nonsingular. Note that
\begin{equation} \label{Equation:vecODEcExponentialMatrix} 
	\left[\begin{array}{c} {vec(\Omega _{1} (\tau ))} \\ {\vdots } \\ {vec(\Omega _{6} (\tau ))} \end{array}\right]=e^{E\tau } \left[\begin{array}{c} {vec(\Omega _{1} (0))} \\ {\vdots } \\ {vec(\Omega _{6} (0))} \end{array}\right],  
\end{equation} 
generates $\Omega _{1} (\tau )$, hence solves $P(\tau )$ analytically.

\begin{cor} \label{Corollary:SpectrumNonsingular}
	The matrix $F_{1} +F_{2} e^{Eh}$ in \eqref{Equation:vecODEcCouplingConditionsCombined} is nonsingular if and only if the spectrum condition in \Cref{Definition:Spectrum} is satisfied.
\end{cor}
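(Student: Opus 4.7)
The plan is to deduce this corollary directly from Theorem~\ref{Theorem:UniqueODESpectrum} by arguing that nonsingularity of $F_1 + F_2 e^{Eh}$ is exactly the algebraic restatement of ``the ODEc~\eqref{Equation:ODEc} has a unique solution for every $Q$.''

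First, I would observe that the full initial condition $[vec(\Omega_1(0))^\intercal,\ldots,vec(\Omega_6(0))^\intercal]^\intercal$ uniquely determines the solution of the linear time-invariant ODE~\eqref{Equation:vecODEcDynamics} via the matrix exponential $e^{E\tau}$, and in particular produces the values at $\tau = h$ appearing in the boundary/algebraic constraints. Substituting $[vec(\Omega_i(h))] = e^{Eh}[vec(\Omega_i(0))]$ into \eqref{Equation:vecODEcCouplingConditions} yields precisely the square linear system~\eqref{Equation:vecODEcCouplingConditionsCombined} whose coefficient matrix is $F_1 + F_2 e^{Eh}$. Thus existence and uniqueness of a solution to the ODEc for a given $Q$ is equivalent to existence and uniqueness of a solution to \eqref{Equation:vecODEcCouplingConditionsCombined} for the corresponding right-hand side $[-vec(Q)^\intercal, 0,\ldots,0]^\intercal$.

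Next, since $Q$ ranges over all symmetric $n\times n$ matrices, the right-hand side of \eqref{Equation:vecODEcCouplingConditionsCombined} ranges over a linear subspace that includes all of $\mathrm{range}([-I;0;\ldots;0])$. The key point is the standard linear algebra fact: the square matrix $F_1 + F_2 e^{Eh}$ is nonsingular if and only if \eqref{Equation:vecODEcCouplingConditionsCombined} has a unique solution for every possible right-hand side, which in turn is equivalent to the homogeneous system ($Q = \mathbf{0}$) admitting only the trivial solution. By Theorem~\ref{Theorem:UniqueODESpectrum}, this last property holds if and only if the spectrum condition of Definition~\ref{Definition:Spectrum} is satisfied.

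Chaining these equivalences gives the corollary. The argument is essentially bookkeeping; no obstacle is anticipated since the heavy lifting (necessity and sufficiency of the spectrum condition for uniqueness of the ODEc) was already carried out in Theorem~\ref{Theorem:UniqueODESpectrum}. The only subtlety to note is that one must verify that ``uniqueness of the ODEc for every symmetric $Q$'' suffices to conclude nonsingularity of the full $n_s \times n_s$ matrix; this holds because if the homogeneous ODEc ($Q=\mathbf{0}$) forces the trivial initial condition, then the nullspace of $F_1 + F_2 e^{Eh}$ is trivial, which is equivalent to nonsingularity.
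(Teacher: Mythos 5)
Your proposal is correct and follows essentially the same route as the paper: both reduce nonsingularity of $F_{1}+F_{2}e^{Eh}$ to triviality of its kernel, identify that kernel with nontrivial solutions of the ODEc for $Q=\mathbf{0}$ via the Kronecker representation \eqref{Equation:vecODEcDynamics}--\eqref{Equation:vecODEcCouplingConditionsCombined}, and then invoke \Cref{Theorem:UniqueODESpectrum}. Your closing remark about why uniqueness over symmetric $Q$ suffices (injectivity of a square matrix is equivalent to nonsingularity) is exactly the point the paper relies on implicitly.
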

\begin{proof}
	To show necessity, assume the spectrum condition is not satisfied. Then by \Cref{Theorem:UniqueODESpectrum}, there exists a nontrivial solution for \eqref{Equation:ODEc} when $Q=\mathbf{0}$. From the Kronecker representation \eqref{Equation:vecODEcDynamics} and \eqref{Equation:vecODEcCouplingConditions} of the ODEc \eqref{Equation:ODEc}, the nontrivial solution satisfies \eqref{Equation:vecODEcCouplingConditionsCombined} for $Q=\mathbf{0}$. Therefore, $\exists v \ne \mathbf{0}: v \in \ker(F_{1} +F_{2} e^{Eh})$, and hence $F_{1} +F_{2} e^{Eh}$ is singular.
	
	Conversely, to show sufficiency, assume $\exists v \ne \mathbf{0}: v \in \ker(F_{1} +F_{2} e^{Eh})$. Then from \eqref{Equation:vecODEcDynamics} and \eqref{Equation:vecODEcCouplingConditions}, there exists a nontrivial solution for \eqref{Equation:ODEc} when $Q=\mathbf{0}$, and by \Cref{Theorem:UniqueODESpectrum}, the spectrum condition is not satisfied.
\end{proof}

An example incorrectly handled in \cite{SCL2006}, as noted in Section \ref{Section:OriginsProblem}, is treated next.

\textbf{Example 1}: Consider the following linear time-delay system
\begin{align*}
	\dot{x}(t)&={{A}_{0}}x(t)+{{A}_{1}}x(t-1)+\smallint\limits_{-1}^{0}{{{A}_{D}}(\theta )x(t+\theta )d\theta }, \\ 
	{{A}_{0}}&=\left[ \begin{matrix}
	-1 & 0  \\
	0 & -1  \\
	\end{matrix} \right], {{A}_{1}}=\left[ \begin{matrix}
	0 & 1  \\
	-1 & 0  \\
	\end{matrix} \right], \\
	{{B}_{0}}&=\left[ \begin{matrix}
	0.3 & 0  \\
	0 & 0.3  \\
	\end{matrix} \right],\text{ }{{B}_{1}}=\left[ \begin{matrix}
	0 & 1  \\
	-1 & 0  \\
	\end{matrix} \right]{{B}_{0}}, \\ 
	 {{A}_{D}}(\theta ) &=\sin (\pi \theta ){{B}_{0}}+\cos (\pi \theta ){{B}_{1}}.
\end{align*}

\noindent The system is stable and the spectrum condition holds, which can be shown by plotting the spectrum of the time-delay system. To use the ODEc \eqref{Equation:ODEc}, put $A_{D} (\theta )$ in the form $A_{D} (\theta )=C_{d}^{} e^{A_{d} \theta } B_{d}^{} $ as follows
	\[{{A}_{d}}=\pi \left[ \begin{matrix}
	0 & -1  \\
	1 & 0  \\
	\end{matrix} \right],\text{   }{{e}^{{{A}_{d}}\theta }}=\left[ \begin{matrix}
	\cos (\pi \theta ) & -\sin (\pi \theta )  \\
	\sin (\pi \theta ) & \cos (\pi \theta )  \\
	\end{matrix} \right],\text{  }C_{d}^{{}}={{I}_{2}}\text{,  }B_{d}^{{}}={{B}_{1}}.\]

\noindent From \eqref{Equation:vecODEcCouplingConditionsCombined}, and letting $Q=I_{2} $, we get $vec(\Omega _{3} (0))=vec(\Omega _{5} (0))=0_{4\times 1} $ and
\begin{align*}
	vec({{\Omega }_{1}}(0)) &={{\left[ \begin{matrix} 0.7072 & 0 & 0 & 0.7072  \end{matrix} \right]}^{\intercal}}, \\ 
	vec({{\Omega }_{2}}(0)) &={{\left[ \begin{matrix} 0.2636 & 0.3165 & -0.3165 & 0.2636  \end{matrix} \right]}^{\intercal}}, \\ 
	\text{ }vec({{\Omega }_{4}}(0)) &={{\left[ \begin{matrix}	0.1909 & -0.3642 & 0.3642 & 0.1909 	\end{matrix} \right]}^{\intercal}}, \\ 
	vec({{\Omega }_{6}}(0)) &={{\left[ \begin{matrix} -0.1909 & 0.3642 & -0.3642 & -0.1909 	\end{matrix} \right]}^{\intercal}}.  
\end{align*}

\noindent Solving the ODEc via \eqref{Equation:vecODEcExponentialMatrix} we get $P(\tau )$ on $[0,h]$ as shown in Figure 1. 
\begin{figure}[h]
	\centering
	\includegraphics*[width=3.54in, height=1.80in, keepaspectratio=false, trim=0.00in 0.00in 0.26in 0.00in]{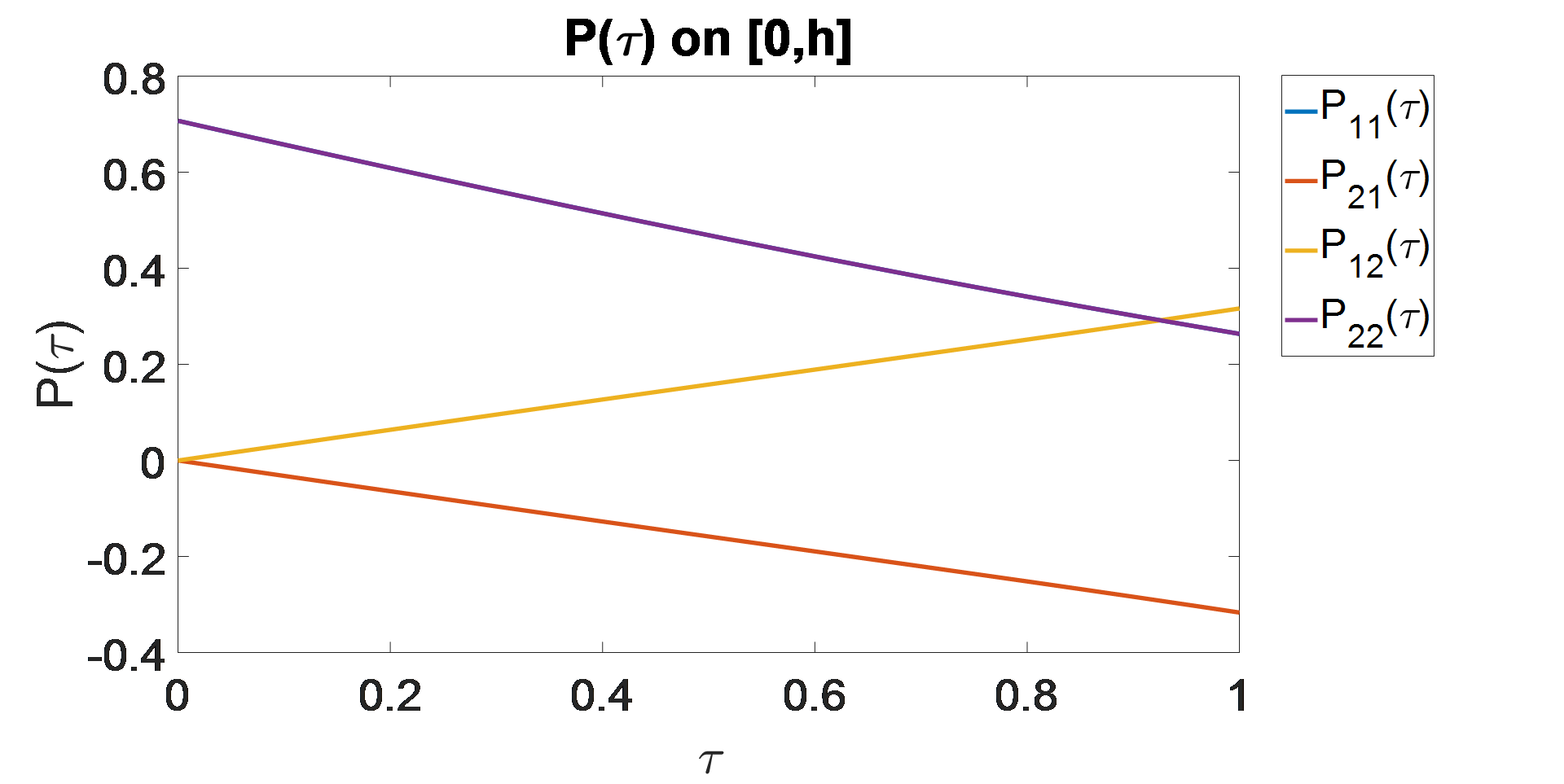}
	\caption{$P(\tau )$ as determined from $\Omega _{1} (\tau )$ in \eqref{Equation:vecODEcExponentialMatrix}.}
\end{figure}

\noindent Note that $P(0)={{\Omega }_{1}}(0)=0.7072\times I_{2}$. From \eqref{Equation:Cost-to-go_P}, $V(\phi )=\phi (0)^{\intercal} P(0)\phi (0)$ is the cost over the trajectories of \eqref{Equation:TimeDelaySystem} assuming an initial function such that $\phi (0)\ne 0$ and $\phi (\theta )=0$ for $h\le \theta <0$. This is validated by writing $y(t)=\int_{-1}^{0}e^{A_{d} \theta } B_{d}^{} x(t+\theta )d\theta  $ to get
\begin{equation} \label{Equation:MATLABValidated}
	\begin{aligned}
		 \dot{x}(t)&={{A}_{0}}x(t)+C_{d}y(t)+{{A}_{1}}x(t-1), \\ 
		 \dot{y}(t)&={{B}_{d}}x(t)-{{A}_{d}}y(t)-{{e}^{-{{A}_{d}}}}{{B}_{d}}x(t-1).
	\end{aligned}
\end{equation} 
One can then use a delay differential equation solver, such as \verb|dde23| of MATLAB in our case, to compute the cost-to-go of \eqref{Equation:MATLABValidated} for different $\phi (0)$ to validate $V(\phi )$, and thus implicitly $P(0)$.

\begin{rem} \label{Remark:ConstantDelayLike}
\Cref{Equation:TimeDelaySystem} can be written as \eqref{Equation:MATLABValidated} for $A_{D} (\theta )=C_{d}^{} e^{A_{d} \theta } B_{d}^{} $. One may use existing results for a constant delay system, see \cite{KharitonovPlischke2006}, to obtain an analytic solution $\hat{P}(\tau )$ for the resulting DDEc system whose DDE is double the rows and double the columns of the original DDE \eqref{Equation:DynamicDDE}. However, writing $P(\tau )$ of the DDE \eqref{Equation:DynamicDDE} in terms of $\hat{P}(\tau )$ is not obvious.
\end{rem}

\begin{rem} \label{Remark:AliseykoResult}
In recent work \cite{AliseykoExponentialKernel2017}, see also \cite{AliseykoPiecewiseConstant2017}; the author comments on the insufficiency of the boundary conditions provided in \cite{SCL2006}. To address existence and uniqueness, the author adds a new group of boundary conditions in the form of integral constraints. The result therefore is no longer an ordinary differential equation system with algebraically related split-boundary conditions.
\end{rem}

\section{Conclusion} \label{Section:Conclusion}
This work provides a method to compute the quadratic cost functional \eqref{Equation:Cost-to-go_P} for systems with distributed delays. Future directions include extensions to multiple commensurate delays and generalized quadratic cost functionals suitable for optimal control applications.


%

%

\section*{Acknowledgment}
The authors wish to acknowledge the communication with Vladimir L. Kharitonov pertaining to \cite{SCL2006}. We would like to also thank the anonymous reviewers, and the Associate Editor, for their valuable commentary and technical feedback. The second author would like to acknowledge the support provided by the Toyota Research Institute through the Toyota-CSAIL Joint Research Center.

\ifCLASSOPTIONcaptionsoff
  \newpage
\fi

\end{document}